\def\BibTeX{{\rm B\kern-.05em{\sc i\kern-.025em b}\kern-.08em
		T\kern-.1667em\lower.7ex\hbox{E}\kern-.125emX}}
\newtheorem{lemma}{Lemma}
\newtheorem{thm}{Theorem}
\newtheorem{ass}{Assumption}
\newtheorem{remark}{Remark}
\begin{document}
	\title{Adaptive Event-triggered Formation Control of Autonomous Vehicles}
	
	\author{Ziming Wang\orcidlink{0000-0001-7000-9578} \IEEEmembership{Student Member,~IEEE}, Yihuai Zhang\orcidlink{0000-0002-7363-7796} \IEEEmembership{Student Member,~IEEE}, Chenguang Zhao\orcidlink{0000-0001-9314-908X} \IEEEmembership{Student Member,~IEEE} and Huan Yu$^{*}$\orcidlink{0000-0002-9324-0200} \IEEEmembership{Senior Member,~IEEE}

        \thanks{*Corresponding author}
		\thanks{Ziming Wang,  Yihuai Zhang, Chenguang Zhao and Huan Yu are with The Hong Kong University of Science and Technology (Guangzhou), Nansha, Guangzhou, 511458, Guangdong, China. (e-mail: huanyu@ust.hk)}
	}
	\maketitle
	
	\begin{abstract}
		This paper presents adaptive event-triggered formation control strategies for autonomous vehicles (AVs) subject to longitudinal and lateral motion uncertainties. The proposed framework explores various vehicular formations to enable safe and efficient navigation in complex traffic scenarios, such as narrow passages, collaborative obstacle avoidance, and adaptation to cut-in maneuvers. In contrast to conventional platoon control strategies that rely on predefined communication topologies and continuous state transmission, our approach employs a sampling-based observer to reconstruct vehicle dynamics. Building upon an adaptive backstepping continuous-time controller, we design three distinct event-triggered mechanisms, each offering a different trade-off between formation tracking performance and control efficiency by reducing the frequency of control signal updates. A Lyapunov-based stability analysis is conducted to guarantee bounded tracking errors and to avoid Zeno behavior. Finally, the proposed event-triggered controllers are validated through simulations of vehicular formation in three scenarios, highlighting their impact on traffic safety and mobility.
	\end{abstract}
	
	\begin{IEEEkeywords}
		Adaptive control, formation control, event-triggered control,  autonomous vehicles
	\end{IEEEkeywords}

	\section{Introduction}
	    For decades, the persistent challenges of traffic congestion and environmental degradation have driven intensive research into intelligent transportation systems~\cite{intro1,intro2,intro3,yihuai}. Among these initiatives, vehicle platoon control emerges as a promising solution to several pressing challenges in contemporary road transportation, given its capacity to markedly enhance highway capacity \cite{intro1}, improve safety \cite{intro2}, and reduce fuel consumption \cite{intro3}. The earliest implementation of vehicle platoon control can date back to the mid-eighties of the last century in the PATH program in California \cite{intro4,intro5}. The platoon is said to be homogeneous \cite{intro8.4}, if all vehicles have identical dynamics, otherwise it is called heterogeneous \cite{intro8.5}. The primary goal of platoon control is to ensure that achieving uniform velocity across all vehicles within the platoon formation while maintaining a predetermined inter-vehicle spacing policy. It has recently garnered significant research interest~\cite{intro6,intro7,intro8}.

        However, platoon control is limited to longitudinal regulation within fixed single-lane formation, focusing on uniform velocity and spacing. In contrast, formation control integrates both longitudinal and lateral dynamics, enabling adaptive responses to obstacles, traffic cut-ins, and multi-lane scenarios. This flexibility allows formation control to effectively navigate complex and variable traffic conditions, enhancing overall safety and efficiency.  Hence, we focus on a homogeneous vehicular formation control for AVs in linear, square and linear-queue scenarios in this paper.

        In most vehicular formation control research~\cite{intro3,intro8.1,intro9,intro10}, the majority of these communication connectivity models are based on algebraic graph theory, in which the nodes represent AVs and the edges represent the communication links between them. However, \cite{mu} highlights the limitations of using communication topologies. Despite the widespread implementation of vehicle-to-vehicle and vehicle-to-infrastructure technologies, disruptions in communication between certain AVs occur, leading to difficulties in maintaining stable tracking of the expected trajectory for the AVs formation. Moreover, it becomes challenging for a newly formed AVs formation, with additional AVs incorporated into the original formation, to maintain stable tracking of the initially expected trajectory. This difficulty arises from changes in the communication connectivity topology of the initial vehicular formation, rendering the Routh-Hurwitz stability criterion based on the original topology unattainable. This similar situation is also reflected in \cite{intro11,intro12}. Consequently, it is also necessary to consider the vehicular formation control problem that lacks information exchange. This paper proposes an adaptive formation control algorithm for AVs with motion uncertainty, utilizing sampling-based observers instead of relying on the communication connectivity topology. If the leading vehicle encounters communication delays or noise, only the final tracking trajectory of the entire AVs formation is affected.

        For sampling-based observer design, the first method in \cite{ob1} involves designing a sampling-based observer using a consistent approximation of the exact discretized model. Then \cite{ob2} focuses on creating a sampling-based observer based on a continuous observer of the continuous nonlinear model. Based on \cite{ob3}, the final approach is to design a continuous-sampling-based observer for a continuous nonlinear system, where the dynamics are continuous but the output is available only at specific sampling time points. Here in this paper, in most AVs, radar sensors and lidar are the primary method for observation, but measurement inaccuracies in real-world applications are unavoidable. Taking into account the communication limitations between vehicles mentioned before, we propose a sampling-observed adaptive formation control algorithm to tackle these issues.

        In the early stages of automatic control research \cite{eve0,backstepping}, the continuous-in-time control strategy was predominantly employed. This strategy often led to the unnecessary consumption of system resources. To tackle this problem, the event-triggered control (ETC) strategy has been developed. When a certain performance metric exceeds a predefined threshold, the ETC algorithm updates the control law. The debate over the superiority of one strategy over the other remains a subject of scholarly discourse until a seminal study conducted in 1999 \cite{eve1} demonstrated the advantages of event-triggered impulse control over periodic impulse control for a class of first-order stochastic systems, sparking widespread adoption and extension of ETC to both linear and nonlinear systems~\cite{eve2,eve3,chufa}. ETC offers an efficient alternative by updating control signals aperiodically, i.e., only when triggering conditions are met. This not only conserves system resources but also provides a rigorous framework for implementing continuous-in-time controllers on digital platforms.

        Based on the above development of ETC, over the past five years, many recent research efforts have developed distributed ETC strategies for vehicular formation control, addressing challenges posed by nonlinear dynamic models and resource-constrained onboard storage. \cite{eve4} proposed an ETC scheme that can reduce the control frequency and computation burden of the vehicles. Similarly, \cite{eve5} presented a periodic event-triggered fault detection filter to generate a residual signal for this intelligent and connected vehicles to actuator faults and external disturbances. However, their approaches primarily focus on periodic adjustments of triggering conditions, which exhibit limited effectiveness in reducing resource utilization and may inadvertently compromise control performance. Moreover, \cite{mu,eve6,eve7,eve8} introduced innovative triggering mechanisms to improve control efficiency in coordinating vehicular formation. Notably, the aforementioned studies mainly focus on optimizing resources, often neglecting whether the decreased actuation frequency in ETC strategies could potentially impact the safety and mobility of the entire vehicular formation. Therefore, this paper aims to address this aspect. By comparing the inter-vehicle spacing and time headway during the operation of vehicular formations under different control strategies, including the continuous-in-time controller and various ETC-based controllers, we will explore whether there are safety risks and factors affecting mobility in vehicular formations under different ETC strategies.
        
        Inspired by the above information, we propose a class of event-triggered adaptive formation control frameworks for AVs, taking into account motion uncertainties during the control process. Our approach utilizes the backstepping~\cite{backstepping}, neural networks (NNs)~\cite{NNs} and Lyapunov-based stability~\cite{chen}. Furthermore, the comparison among different vehicular formation control research is shown in Table \ref{comp}. The contributions of this paper mainly lie in three parts:
 
    \begin{enumerate}
        \item We propose an adaptive formation control framework that employs a sampling-based observer to address motion uncertainties. Through Lyapunov-based stability analysis, we ensure bounded tracking errors and robustness, while the observer reduces the need for continuous state transmission. Unlike previous studies that concentrate only on longitudinal control \cite{intro9,intro10,intro11,intro12}, our framework comprehensively addresses both longitudinal and lateral dynamics, enhancing its applicability to real-world scenarios.
        \item The proposed framework introduces a class of multi-threshold event-triggered mechanisms with different trigger conditions to further reduce the update frequency of the AVs' controllers, enhancing the scalability and stability of the vehicular formation system. Furthermore, in comparison to \cite{chufa,eve4,eve5,eve6,eve7}, this paper explores how various ETC strategies affect the traffic performance, including the safety and mobility of vehicular formations.
        \item Compared to \cite{mu}, this paper broadens the application scope by considering various vehicular formation scenarios: linear formation for navigating narrow passages, square formation for collaborative obstacle avoidance, and linear queue formation for accommodating cut-ins, thereby enhancing safety, stability, and operational versatility in diverse environments.
    \end{enumerate}

        The rest of the paper is organized as follows. Section~\ref{sec2} introduces the problem formulation, including the establishment of system model and sampling-based observer. The controller design and the multi-threshold event-triggered mechanism are presented in Section~\ref{sec3}. Section~\ref{sec4} presents the main theorem and the stability analysis procedure. In Section~\ref{sec5}, an illustrative example for vehicular formation demonstrates the effectiveness of the proposed adaptive event-triggered control algorithm under different scenarios. Finally, Section~\ref{sec6} draws the conclusion.

        \begin{table*}[!t]
		\caption{Comparison among different vehicular formation control research}
        \label{comp}
		\centering
		\renewcommand\arraybackslash{2}
		\begin{tabular}{ccccccc}
			\toprule[2pt] 
            \textbf{ } & \textbf{ } &
			\multicolumn{3}{c}{\textbf{Event-triggered control strategies}} & 
			\multicolumn{2}{c}{\textbf{Traffic performance}} \\
            \cmidrule(lr){3-5} \cmidrule(lr){6-7} 
			\textbf{Reference} & \textbf{Control direction} & \textbf{Fixed-threshold} & \textbf{Relative-threshold} & \textbf{Switched-threshold} & \textbf{Safety} & \textbf{Mobility} \\
			\midrule 
                Dolk et al., 2017\cite{db1}    & Longitudinal   & \checkmark   \\
                Wen et al., 2018\cite{db4}    & Longitudinal   & \text{ } & \text{ }  & \checkmark \\
                Keijzer and Ferrari 2019\cite{db3} & Longitudinal   & \checkmark   & \text{ } & \checkmark  \\
                Zhang et al., 2020\cite{db2}   & Longitudinal \& Lateral & \checkmark & \checkmark & \text{ } & \checkmark & \checkmark \\
                Ge et al., 2021\cite{intro6}   & Longitudinal  & \text{ }  & \checkmark & \text{ }  & \checkmark \\
                Wu et al., 2022\cite{eve7}     & Longitudinal   & \checkmark  & \checkmark  & \text{ }   & \checkmark\\
                Liu et al., 2023\cite{eve4}     & Longitudinal   & \checkmark   \\
			Xue et al., 2024\cite{mu}       & Longitudinal \& Lateral & \checkmark  \\
                Silva et al., 2025\cite{tian} & Longitudinal & \text{ } & \text{ }  & \checkmark & \checkmark \\
                This paper                      & Longitudinal \& Lateral & \checkmark & \checkmark  & \checkmark & \checkmark & \checkmark \\
			\bottomrule[2pt] 
		\end{tabular}
	\end{table*}

	\section{Problem Formulation}\label{sec2}
	In this section, we consider the vehicular formation tracking control problem with a class of threshold-based event-triggered control strategies, the designed sampling-based observer is used to convert the sampled vehicle positions into vehicle speeds and positions. This control algorithm is used in different scenarios. Fig.\ref{fig1} shows three common scenarios in traffic systems, details are as follows:

    \begin{figure}
            \centering
            \subfloat[Linear formation: navigating narrow passages.]{\includegraphics[width=1\linewidth]{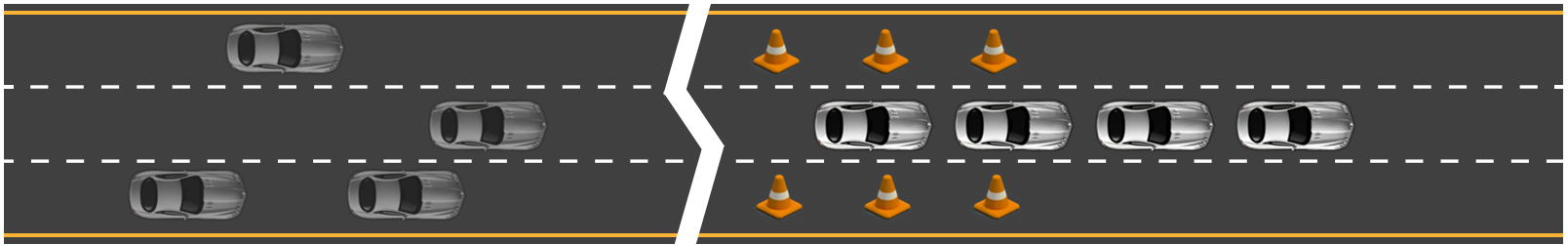}}\\
            \vspace{-8pt}
            \subfloat[Square formation: collaborative obstacle maneuvering.]{\includegraphics[width=1\linewidth]{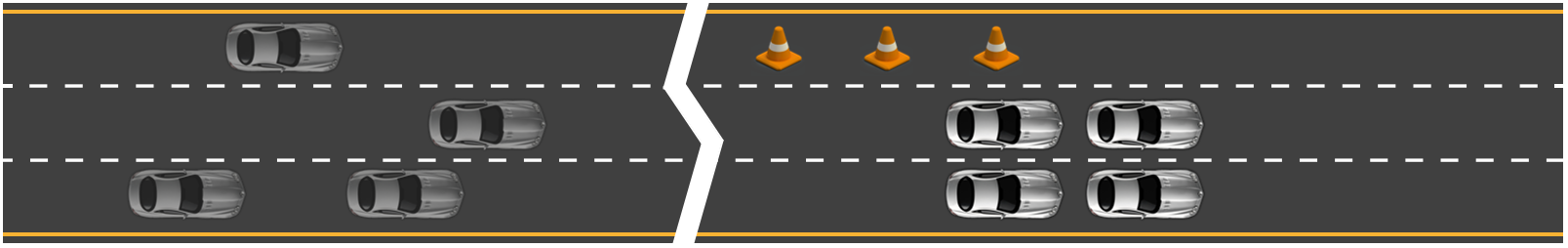}}\\
            \vspace{-8pt}
            \subfloat[Linear-queue formation: flexible spacing for cut-ins.]{\includegraphics[width=1\linewidth]{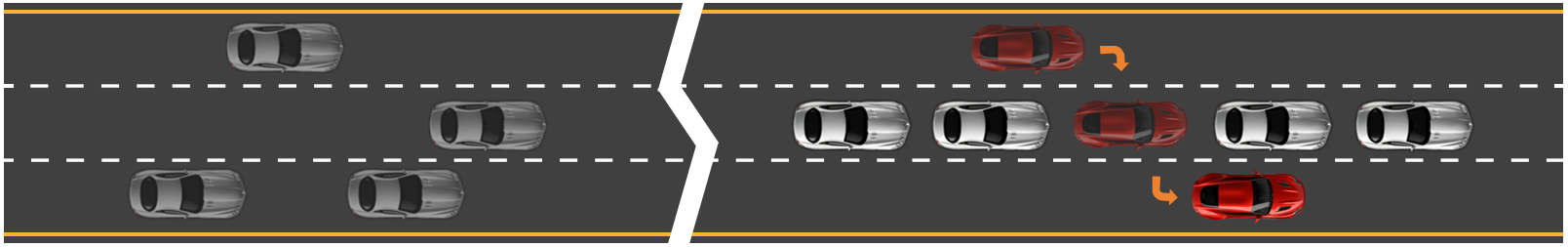}}
            \caption{AVs formation control in different traffic scenarios.}
            \label{fig1}
        \end{figure}
        
        \begin{enumerate}
            \item Linear formation: adopted for navigating narrow passages or lane reductions, ensuring safe and efficient traversal through a single lane.
            \item Square formation: utilized for rapid obstacle avoidance and enhanced spatial awareness, enabling AVs to collaboratively maneuver around static hazards.
            \item  Linear-queue formation: designed to accommodate cut-in vehicles by enabling flexible reconfiguration of inter-vehicle spacing, thereby maintaining safety and stability.
        \end{enumerate}
	
	\subsection{Dynamical Modeling of AVs}
	In this paper, in terms of setting the distance between AVs, we fully consider the relationship between the length of the vehicle itself and the spacing at the front and rear of the vehicle. Therefore, when studying the problem of adaptive formation control, we consider the AVs' formation as a mass, resulting in a multi-agent system. We consider that there are $N$ AVs with $i=1,\cdots, N$ being the index of each vehicle. We define $m_i$ as the mass of AV-$i$ and $D_i$ as the drag factor. We formulate the system model as:
	\begin{equation}
		\label{equ1}
		\begin{aligned}
			\dot{x}_i&=v_i, \\
			\dot{v}_i&=u_i+D_i,
		\end{aligned}
	\end{equation}
	where $x_i=[x_i^x,x_i^y]^T$, with $x_i^x$ and $x_i^y$ being the longitudinal and lateral positions of AV-$i$; $v_i=[v_i^x,v_i^y]^T$ represent the longitudinal and lateral speeds of AV-$i$; $u_i=[\tau_i^x/m_i,\tau_i^y/m_i]^T$ is the control variable with the longitudinal and lateral engine traction force $\tau_i^x$, $\tau_i^y$. The drag factor is $D_i = F_i + P_i$, with $F_i=[f_i^x/m_i,f_i^y/m_i]^T$ and $P_i=[\rho_i^x/m_i,\rho_i^y/m_i]^T$, where $f_i^x$ and $f_i^y$ are the unknown longitudinal and lateral resistances, and $\rho_i^x$ and $\rho_i^y$ are the unknown longitudinal and lateral disturbances.  
    We take AV-$1$ as the leading vehicle, and other AVs as following vehicles. Each AV-$i$ follows AV-$i-1$ for $i=2,\cdots,N$.
    We design the controller to ensure that each following vehicle efficiently and synchronously tracks the trajectory and speed of its preceding vehicle. 

	\subsection{Observer Design}
    In adaptive control, to approximate nonlinear functions and handle uncertainties in dynamic systems, we introduce radial basis function neural networks, enabling the system to effectively adapt to changes and maintain optimal performance under uncertain conditions.  For the unknown nonlinear function $D_i$ in the system (\ref{equ1}), we assume that there exists a basis function $\Lambda_i$ and a bounded approximation error $\sigma_i$ such that $D_i={W_i^*}^T\Lambda_i(\gamma_i)+\sigma_i$, with $\Lambda_i(\gamma_i)=[\Lambda_{i,1}^T(\gamma_i),\Lambda_{i,2}^T(\gamma_i)]\in \mathbb{R}^{2l\times1}$, $\gamma_i$ represents the network's input, has its dimensionality defined by the number of input variables. Here, $\Lambda_i(\gamma_i)=\exp[-(\gamma_i-\gamma_i^*)^T(\gamma_i-\gamma_i^*)/\varphi_i^2]$, where $\gamma_i^*$ stands for the center of the receptive field, $\varphi_i$ is known as the width of Gaussian function. $W_i^*=[W_{i,1}^{*T}, o_{1\times l}; o_{1\times l}, W_{i,2}^{*T}]^T\in \mathbb{R}^{2l\times2}$ stands for the optimal weight vectors with the null vector $o\in \mathbb{R}^{1\times l}$ and the count of tunable weights in the hidden layer of the neural networks is $l$, with $W_{i,j}^{*}\in \mathbb{R}^{l\times1}$, $j=1,2$. $\hat{W}_i$ is the estimation matrix of $W_i^*$, with $\widetilde{W}_i=\hat{W}_i-W_i^*$ being the estimation error. Then, with $i=1,\cdots, N$, the optimal weight $W_i^*$ is defined as
	\begin{equation}
		\label{equ2}
		\begin{aligned}
			W_i^*= {\mathop{\arg\min}}_{W_i}\{{\mathop{\sup}}_{\gamma_i}||W_i^T\Lambda_i(\gamma_i)-D_i||\}.
		\end{aligned}
	\end{equation}

In this paper, we consider the inaccuracies of the sampling information caused by radar and lidar, which are used by most AVs at present. In order to eliminate the impact of sampling information errors in the formation control system, we introduce a sampling-based observer $\hat{x}_i$ that estimates the actual position and speed information by observing the inaccurate position data $\overline{x}_i$, which represents the sampled value at the sampling time $t_k$. Then we design the sampling-based observer as
	\begin{equation}
		\label{equ3}
		\begin{aligned}
			\dot{\hat{x}}_i&=\hat{v}_i+C_{i,1}(\overline{x}_i-\hat{x}_i), \\
			\dot{\hat{v}}_i&=u_i+C_{i,2}(\overline{x}_i-\hat{x}_i)+\hat{W}_i^T\Lambda_i(\gamma_i),
		\end{aligned}
	\end{equation}
	where $C_{i,1}\in \mathbb{R}^{2\times2}$ and $C_{i,2}\in \mathbb{R}^{2\times2}$ are designed output injection gains.
	
	\section{Main Result}\label{sec3}

    	\begin{table}[!tbp]
		\caption{Notations}
        \label{notation}
		\centering
		\renewcommand\arraybackslash{2}
		\begin{tabular}{cc}
			\toprule[2pt] 
			$x_i$, $v_i$ & the position and speed of AV-$i$ \\
			\midrule 
			$x_i^r$, $v_i^r$ & the expected position and speed of AV-$i$  \\
			\midrule
			$\hat{x}_i$, $\hat{v}_i$ & the observed position and speed of AV-$i$ \\
			\midrule
			$z_{i,1}$, $z_{i,2}$ & the position and speed tracking errors of AV-$i$  \\
			\midrule
                $e_{i,1}$, $e_{i,2}$ & the position and speed observation errors of AV-$i$  \\
			\midrule
			$l_i$ & the expected inter-vehicle distances of AV-$i$ \\
                \midrule
			$\alpha_i$ & the backstepping based virtual controller of AV-$i$ \\
			\midrule
			$\mu_i$ & the continuous-in-time controller of AV-$i$ \\
			\midrule
			$w_i$ & the ETC based updated controller of AV-$i$ \\
			\bottomrule[2pt] 
		\end{tabular}
	\end{table}
    
    We consider the AVs formation control problem for notations in Table \ref{notation}. In this research, the first leading vehicle leads the total vehicular formation to track the expected position $x_1^r=[x_1^{rx},x_1^{ry}]^T$ and speed $v_1^r=[v_1^{rx},v_1^{ry}]^T$. The remaining vehicles' expected positions $x_i^r=[x_i^{rx},x_i^{ry}]^T$ and speeds $v_i^r=[v_i^{rx},v_i^{ry}]^T$, $i=2,\cdots, N$. Notably, $\hat{x}_{i-1}=[\hat{x}_{i-1}^x,\hat{x}_{i-1}^y]^T$, $i=2,\cdots, N$, contains the observed single preceding vehicle's longitudinal and lateral position, and $l_i=[l_i^x,l_i^y]^T$, $i=1,\cdots, N$, contains the expected longitudinal and lateral inter-vehicle distances. Then we obtain
	\begin{equation}
		\label{equ4}
		\begin{aligned}
			x_i^{rx}&=\hat{x}_{i-1}^x-l_i^x,\\
			x_i^{ry}&=\hat{x}_{i-1}^y-l_i^y,\\
			\dot{x}_i^{rx}&=v_{i-1}^{rx}, \\
			\dot{x}_i^{ry}&=v_{i-1}^{ry}.
		\end{aligned}
	\end{equation} 
	
	With a designed positive definite matrix $K_{i,1}\in \mathbb{R}^{2\times2}$, we define the position tracking error and speed tracking error for position $z_{i,1}$ and speed $z_{i,2}$, and based on the backstepping method, the virtual controller $\alpha_i$ as
	\begin{equation}
		\label{equ5}
		\begin{aligned}
			z_{i,1}&=\hat{x}_i-x_i^r,\\
			z_{i,2}&=\hat{v}_i-\dot{x}_i^r-\alpha_i,\\
            \alpha_i &=-K_{i,1}z_{i,1}.
		\end{aligned}
	\end{equation}
	
	We utilize the Lyapunov function composed of these errors and neural network factors to analyze the control performance. If its derivative at the equilibrium point is zero, positive elsewhere, and its time derivative is negative, this indicates that the errors are controlled within a certain range and that the system is asymptotically stable. Thereafter, define the Lyapunov function as
	\begin{equation}
		\label{equ6}
		\begin{aligned}
			V_0=&e^TPe,\\
			V_i= &\frac{1}{2}z_{i,1}^Tz_{i,1}+\frac{1}{2}z_{i,2}^Tz_{i,2}+\frac{1}{2}\widetilde{\sigma}_i^T\Delta_i\widetilde{\sigma}_i  \\
			&+\frac{1}{2}\widetilde{W}_{i,1}^TO_{i,1}^{-1}\widetilde{W}_{i,1}+\frac{1}{2}\widetilde{W}_{i,2}^TO_{i,2}^{-1}\widetilde{W}_{i,2},
		\end{aligned}
	\end{equation}
	where $O_{i,1},O_{i,2}\in \mathbb{R}^{l\times l}$, $\Delta_i\in \mathbb{R}^{2\times2}$ are designed constant positive definite diagonal matrices. $P \in \mathbb{R}^{4n\times4n}$ denotes a positive symmetric matrix,  $e=[e^1,e^2]^T \in \mathbb{R}^{4n\times1}$, where $e^1=[e_{1,1},e_{2,1},...,e_{n,1}]^T$, $e^2=[e_{1,2},e_{2,2},...,e_{n,2}]^T$, with $e_{i,1}$, $e_{i,2}$ representing the error in the observed and actual position and speed respectively, i.e., $e_{i,1}=x_i-\hat{x}_i$ and $e_{i,2}=v_i-\hat{v}_i$. 
	
	According to \cite{back}, to ensure the robustness and adaptability of the system and to enhance the overall performance and stability, we introduce a recursive technique to design adaptive controllers, the backstepping method. The system is decomposed into multiple subsystems, each equipped with a virtual controller. The design process begins with the simplest subsystem and gradually advances to design more complex controllers. At each step, the continuous-in-time controller $\mu_i$ and parameter update laws $\hat{W}_{i,j}$, $\hat{\sigma}_i$ are designed as
	\begin{equation}
		\label{equ7}
		\begin{aligned}
			\mu_i&=-K_{i,2}z_{i,2}-z_{i,1}-\hat{W}_i^T\Lambda_{i}(\gamma_i)-\kappa(z_{i,2})\hat{\sigma}_i+\dot{\alpha}_i+\ddot{x}^r_i,\\
			\dot{\hat{W}}_{i,j}&=O_{i,j}(\Lambda_{i,j}z_{i,2,j}-\Xi_{i,j}\hat{W}_{i,j}),\\
			\dot{\hat{\sigma}}_i&=\Delta_i[\kappa_i(z_{i,2})z_{i,2}-\Upsilon_i(\hat{\sigma}_i-\sigma^0_i)],
		\end{aligned}
	\end{equation}
	where $i\in N$, $K_{i,2}$ is a constant positive definite diagonal matrix, $\Xi_{i,j}\in \mathbb{R}$ is a positive constant, $\Upsilon_i\in \mathbb{R}^{2\times2}$ is a constant positive definite diagonal matrix, $\kappa(z_{i,2})={\rm diag}\{{\rm sgn}(z_{i,2,1}),{\rm sgn}(z_{i,2,2})\}\in \mathbb{R}^{2\times2}$ where ${\rm sgn}(\cdot)$ is the signum function. $z_{i,2,j}$ is the $j$th component of vector $z_{i,2}$ where $j=1,2$.

    \begin{remark}
  To model the vehicular formation tracking problem for AVs, we introduce the system (\ref{equ1}), including dynamical drag factor $D_i$, which is approximated by the radial basis function neural networks in (\ref{equ2}). We design the sampling-based observer for state estimation from inaccurate sampled data collected by radar and lidar in (\ref{equ3}). We set the desired trajectory for the vehicular formation in (\ref{equ4}), and define the position and speed tracking errors for each vehicle in (\ref{equ5}). The Lyapunov function is defined in (\ref{equ6}) is intended to facilitate the analysis of control performance. Before the stability analysis, based on the backstepping method, we design the continuous-in-time controller and parameter update laws in (\ref{equ7}).
\end{remark}
	
	For practical implementation of AVs formation control ($\ref{equ7}$), it is unrealistic to have continuous transmission of control signals. To overcome this problem, the event-triggered control method is employed for digital implementation. In what follows, we present three different threshold-based event-triggered control strategies to demonstrate the advantages of different triggering conditions in optimizing the controller, facilitating subsequent scholars in discerning the selection of event-triggered strategies for AVs formation problems. The control procedure is concluded in Algorithm \ref{algorithm1}.
	
	\begin{algorithm}[!t]
		\caption{Algorithm of the event-triggered control}
		\label{algorithm1}
		\renewcommand{\algorithmicrequire}{\textbf{Input:}}
		\renewcommand{\algorithmicensure}{\textbf{Output:}}
		
		\begin{algorithmic}[1]
			\REQUIRE Basic parameters: N, T, t

                ETC parameters: $\varsigma_i$, $\overline{\varsigma}_i$, $\varepsilon_{i,1}$, $\varepsilon_{i,2}$, $\zeta_i$, $\xi_i$, $\overline{\xi}_i$, $S$
            
			ETC strategies switching parameter: $Case$

			\ENSURE $u_i$
			\FOR{$i$ in [$N$]}
			\STATE Calculate the tracking error $z_{i,1}$, $z_{i,2}$, the original controller $\mu_i$ and the adaptive parameter $W_{i,j}$ by Eq. (\ref{equ5}) and Eq. (\ref{equ7})
			\FOR{$t$ in [$T$]}
			\STATE \# Trigger condition unmet, controller unchanged
			\STATE $u_i(t)=u_i(t)$  
			\IF {$Case=1$}
			\IF {$|e_i|\geq\varsigma_i$}
			\STATE Calculate $w_i(t)$ by Eq. (\ref{equ8})
			\STATE $u_i(t)=w_i(t_i^k)$
			\ENDIF
			\ELSIF{$Case=2$}
			\IF {$|e_i|\geq\zeta_i|u_i(t)|+\xi_i$}
			\STATE Calculate $w_i(t)$ by Eq. (\ref{equ10})
			\STATE $u_i(t)=w_i(t_i^k)$
			\ENDIF
			\ELSIF{$Case=3$}
			\IF {$|u_i|<S$ \& $|e_i|\geq\varsigma_i$}
			\STATE Calculate $w_i(t)$ by Eq. (\ref{equ8})
			\STATE $u_i(t)=w_i(t_i^k)$
			\ELSIF {$|u_i|\geq S$ \& $|e_i|\geq\zeta_i|u_i(t)|+\xi_i$}
			\STATE Calculate $w_i(t)$ by Eq. (\ref{equ10})
			\STATE $u_i(t)=w_i(t_i^k)$
			\ENDIF
			\ENDIF
			\ENDFOR
			\ENDFOR
			\RETURN $u_i$
		\end{algorithmic}
	\end{algorithm}

	\subsection{Fixed-threshold Strategy}
	This strategy utilizes a fixed threshold $\varsigma_i$ to activate controller updates, initiating control signal adjustments solely when the measurement error surpasses a predefined constant boundary. In this case, with the continuous-in-time controller $\mu_i$ in (\ref{equ7}), the updated controller in the fixed-threshold strategy is reformulated as	
	\begin{equation}
		\label{equ8}
		\begin{aligned}
			w_i(t)=\mu_i-\overline{\varsigma}_i
			\begin{bmatrix}
				\tanh(\frac{\overline{\varsigma}_iz_{i,2,1}}{\varepsilon_{i,1}})\\
				\tanh(\frac{\overline{\varsigma}_iz_{i,2,2}}{\varepsilon_{i,2}})
			\end{bmatrix}.
		\end{aligned}
	\end{equation}
        
        The triggering event is defined as
	\begin{equation}
		\label{equ9}
		\begin{aligned}
			u_i(t)&=w_i(t_i^k),\\
			t_i^{k+1}&=\inf\{t>t_i^k|\lvert\lvert e_i(t)\rvert\rvert\geq\varsigma_i\},
		\end{aligned}
	\end{equation}
	where $e_i(t)=w_i(t)-u_i(t)$ denotes the measurement error, $\varepsilon_{i,1}$, $\varepsilon_{i,2}$, $\overline{\varsigma}_i$ and $\varsigma_i$ are all designed positive constants, $\overline{\varsigma}_i>\varsigma_i$. $t_i^k$ is the controller update time, i.e. whenever ($\ref{equ9}$) is satisfied, the time will be marked as $t_i^{k+1}$ and the control value $u_i(t_i^{k+1})$ will be applied into the system. During the period $t_i^k\leq t_i<t_i^{k+1}$, the control signal holds as a constant, i.e. $w_i(t_i^k)$.

	\subsection{Relative-threshold Strategy}	
	In ($\ref{equ9}$), it is observed that the threshold $\varsigma_i$ remains constant regardless of the magnitude of the control signal. Nonetheless, in the context of stabilization challenges, akin to the concept proposed in \cite{relative}, it is advantageous to employ a dynamic threshold for the triggering event. Specifically, when the control signal $u_i$ is high, allowing for a larger measurement error is beneficial, as it results in extended update intervals. Conversely, as the system states converge towards the equilibrium point at zero, the control signal $u_i$ diminishes, necessitating a reduced threshold to enable more precise control, thereby enhancing system performance. Based on this consideration, we introduce the following relative-threshold control strategy. In this case, with the continuous-in-time controller $\mu_i$ in (\ref{equ7}), the updated controller in the relative-threshold strategy is reformulated as
	\begin{equation}
		\label{equ10}
		\begin{aligned}
			w_i(t)=-(1+\zeta_i)(\mu_i
			\begin{bmatrix}
				\tanh(\frac{\mu_iz_{i,2,1}}{\varepsilon_{i,1}})\\
				\tanh(\frac{\mu_iz_{i,2,2}}{\varepsilon_{i,2}})
			\end{bmatrix}
			+\overline{\xi}_i
			\begin{bmatrix}
				\tanh(\frac{\overline{\xi}_iz_{i,2,1}}{\varepsilon_{i,1}})\\
				\tanh(\frac{\overline{\xi}_iz_{i,2,2}}{\varepsilon_{i,2}})
			\end{bmatrix})
		\end{aligned}
	\end{equation}
	
	The triggering event is defined as
	\begin{equation}
		\label{equ11}
		\begin{aligned}
			u_i(t)&=w_i(t_i^k)\\
			t_i^{k+1}&=\inf\{t>t_i^k|\lvert\lvert e_i(t)\rvert\rvert\geq\zeta_i|u_i(t)|+\xi_i\}
		\end{aligned}
	\end{equation}
	where $e_i(t)=w_i(t)-u_i(t)$ denotes the measurement error, $\varepsilon_{i,1}$, $\varepsilon_{i,2}$, $\xi_i$, $0<\zeta_i<1$ and $\overline{\xi}_i>\xi_i/(1-\zeta_i)$ are all designed positive constants.
	
	\subsection{Switched-threshold Strategy}	
	In this part, we present a switched-threshold strategy. The relative-threshold strategy adapts the threshold in proportion to the magnitude of the control signal. This means that when the control signal is large, the threshold increases, allowing for longer update intervals, and as the control signal approaches zero, the control becomes more precise, enhanced system performance. Nevertheless, an excessively large control signal may cause substantial measurement errors, resulting in abrupt signal changes during controller updates. In contrast, the fixed-threshold strategy maintains a consistent upper bound on measurement errors, regardless of the control signal size. Taking these factors into account, we propose the switched-threshold strategy below. According to \cite{chufa}, with switching boundary $S$ and parameters $\zeta_i$, $\xi_i$, $\varsigma_i$, the triggering event is defined as
	\begin{equation}
		\label{equ14}
		\begin{aligned}
			u_i(t)&=w_i(t_i^k)\\
			t_{s+1}&=\left\{
			\begin{aligned}
				\inf\{t\in R||e_i(t)|\geq\zeta_i|u_i(t)|+\xi_i\} &, \mbox{if }|u_i(t)|< S \\
				\inf\{t\in R||e_i(t)|\geq\varsigma_i\} \quad\quad\quad\quad\quad&, \mbox{if }|u_i(t)|\geq S \\
			\end{aligned}
			\right.
		\end{aligned}
	\end{equation}
	
	The switched-threshold strategy merges elements from the initial two threshold strategies. It operates by utilizing the relative-threshold strategy when the magnitude of the control signal, $|u_i(t)|$, is less than a predefined value $S$, This allows for precise control when necessary. In contrast, if $|u_i(t)|$ exceeds this boundary, the system switches to the fixed-threshold method to ensure that the measurement error remains within a controlled range, thus safeguarding the system performance. Therefore, this strategy can not only obtain a reasonable update interval but also avoid the excessive large impulse.
	
\begin{remark}
  The three ETC strategies offer distinct trade-offs between formation tracking control precision and resource efficiency. The fixed-threshold strategy (\ref{equ8})(\ref{equ9}) initiates updates when the measurement error of the controllers surpasses a constant boundary $\varsigma_i$, ensuring periodic triggering but lacking precise control over the triggering condition. On the other hand, the relative-threshold strategy (\ref{equ10})(\ref{equ11}) adjusts the threshold $\zeta_i|u_i(t)|+\xi_i$ in proportion to the controller value, improving resource efficiency during significant maneuvers while enhancing precision near equilibrium. The switched-threshold strategy (\ref{equ14}) combines both strategies. It optimally balances computational load and formation tracking performance, reducing abrupt control changes while maintaining adaptive error constraints.
\end{remark}

	\section{Stability Analysis}\label{sec4}
	In this section, We rigorously evaluate these three types of ETC strategies to confirm the stability of the closed-loop system and its tracking performance. Furthermore, we demonstrate that the proposed strategies satisfy the requirement of avoiding Zeno behavior. These results are formally presented in the following theorems.

    \begin{thm}
        Based on the closed-loop system $(\ref{equ1})$, the continuous-in-time controller ($\ref{equ7}$) and the updated controller ($\ref{equ8}$) under the fixed-threshold strategy ($\ref{equ9}$), the overall AVs formation could track the expected position and speed trajectories, and all signals within the closed-loop system are bounded. In other words, the position and speed tracking errors $z_{i,1}$ and $z_{i,2}$ in ($\ref{equ5}$), the sampling-based observer error $e$ and the parameter estimation errors $\widetilde{W}_{i,j}$ and $\widetilde{\sigma}_i$ in ($\ref{equ6}$) are all bounded. Mathematically, with the positive upper bounds of the corresponding errors $B_x$, $B_v$, $B_{z_1}$, $B_{z_2}$, $B_W$ and $B_\sigma$, it holds that
            
        	\begin{equation}
        		\label{equ15}
        		\begin{aligned}
                        &|\hat{x}_i-x_i|\leq B_x\\
                        &|\hat{v}_i-v_i|\leq B_v\\
                        &|\hat{x}_i-x_i^r|\leq B_{z_1}\\
                        &|\hat{v}_i-\dot{x}_i^r-\alpha_i|\leq B_{z_2} \\
        			&||\hat{W}_{i,j}-W_{i,j}^*||\leq B_W\\
        			&|\hat{\sigma}_i-\sigma_i|\leq B_\sigma
        		\end{aligned}
        	\end{equation}
        
            Meanwhile, there exists a time $t^*$ such that the inter-execution interval $\{t_i^{k+1}-t_i^k\}$ are lower bounded by $t^*$, the Zeno behavior is effectively excluded.
    \end{thm}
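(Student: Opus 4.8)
The plan is to prove the theorem in two stages: first establishing boundedness of all closed-loop signals via a composite Lyapunov function, then deriving a uniform positive lower bound on the inter-execution intervals to exclude Zeno behavior.

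\textbf{Stage 1: Boundedness.} First I would form the composite Lyapunov function $V = V_0 + \sum_{i=1}^{N} V_i$ from \eqref{equ6}. Differentiating $V_0 = e^T P e$ along the observer error dynamics (obtained by subtracting \eqref{equ3} from \eqref{equ1}) yields, after choosing the injection gains $C_{i,1}, C_{i,2}$ so that the associated error matrix is Hurwitz and $P$ solves the corresponding Lyapunov equation, a term of the form $-\lambda_{\min}(Q)\|e\|^2$ plus cross terms involving $\widetilde W_i^T \Lambda_i$, the approximation error $\sigma_i$, and the sampling inaccuracy $\overline{x}_i - x_i$. For each $\dot V_i$, I would substitute $\dot z_{i,1} = \hat v_i - \dot x_i^r = z_{i,2} + \alpha_i = z_{i,2} - K_{i,1} z_{i,1}$ using \eqref{equ5}, and $\dot z_{i,2} = u_i + C_{i,2}(\overline{x}_i - \hat x_i) + \hat W_i^T\Lambda_i - \ddot x_i^r - \dot\alpha_i$. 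Plugging in the continuous-in-time controller $\mu_i$ from \eqref{equ7} together with the ETC perturbation: since $u_i = w_i(t_i^k) = w_i(t) - e_i(t)$ and $w_i = \mu_i - \overline{\varsigma}_i[\tanh(\cdot),\tanh(\cdot)]^T$ from \eqref{equ8}, the $\mu_i$ terms cancel $-z_{i,1} - \hat W_i^T\Lambda_i - \kappa(z_{i,2})\hat\sigma_i + \dot\alpha_i + \ddot x_i^r$, leaving $-z_{i,2}^T K_{i,2} z_{i,2}$, the negative cross term $-z_{i,1}^T K_{i,1} z_{i,1}$, and residual terms $z_{i,2}^T(-e_i - \overline{\varsigma}_i[\tanh,\tanh]^T - \kappa(z_{i,2})\hat\sigma_i + \hat W_i^T\Lambda_i - W_i^{*T}\Lambda_i - \sigma_i)$. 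The adaptation laws $\dot{\hat W}_{i,j}$ and $\dot{\hat\sigma}_i$ from \eqref{equ7} are designed precisely so the $\widetilde W$- and $\widetilde\sigma$-quadratic-form derivatives cancel the $z_{i,2}^T\widetilde W_i^T\Lambda_i$ and $\kappa_i(z_{i,2})z_{i,2}$ terms, with the $\sigma$-modification terms $-\Xi_{i,j}\hat W_{i,j}$ and $-\Upsilon_i(\hat\sigma_i - \sigma_i^0)$ producing $-\Xi_{i,j}\widetilde W_{i,j}^T\hat W_{i,j} \le -\frac{\Xi_{i,j}}{2}\|\widetilde W_{i,j}\|^2 + \frac{\Xi_{i,j}}{2}\|W_{i,j}^*\|^2$ and similarly for $\widetilde\sigma_i$. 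The two key inequalities to invoke are $|z_{i,2,j}|\sigma_{i} - z_{i,2,j}\mathrm{sgn}(z_{i,2,j})\hat\sigma_{i} \le |z_{i,2,j}|\,|\widetilde\sigma_{i}|$-type bounds handled by the $\kappa$ term, and the standard $0 \le |\eta| - \eta\tanh(\eta/\varepsilon) \le 0.2785\varepsilon$ bound to absorb the $\tanh$ terms. Collecting everything, I would obtain $\dot V \le -cV + d$ for positive constants $c$ (depending on $\lambda_{\min}$ of the gain matrices and $Q$, and on $\Xi_{i,j}, \Upsilon_i$) and $d$ (collecting $\|W_{i,j}^*\|^2$, $\|\sigma_i^0\|^2$, the $0.2785\varepsilon_{i,j}$ terms, the sampling error bound $\|\overline x_i - x_i\|$, and the triggering threshold $\varsigma_i$, since $\|e_i(t)\| \le \varsigma_i$ for all $t$ between triggers). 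This yields $V(t) \le V(0)e^{-ct} + d/c$, hence uniform ultimate boundedness; extracting the individual bounds $B_x, B_v, B_{z_1}, B_{z_2}, B_W, B_\sigma$ in \eqref{equ15} is then immediate from the definition of $V$ via $\|e\|^2 \le V_0/\lambda_{\min}(P)$ and $\frac12\|z_{i,1}\|^2 \le V_i$, etc.

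\textbf{Stage 2: No Zeno behavior.} Next I would show the inter-execution interval is lower bounded. On $[t_i^k, t_i^{k+1})$ the applied control $u_i$ is held constant, so $\dot e_i(t) = \dot w_i(t)$, giving $\tfrac{d}{dt}\|e_i\| \le \|\dot w_i\|$. From \eqref{equ8}, $\dot w_i = \dot\mu_i - \overline\varsigma_i \tfrac{d}{dt}[\tanh(\overline\varsigma_i z_{i,2,j}/\varepsilon_{i,j})]$, and every term in $\dot\mu_i$ (which involves $\dot z_{i,1}, \dot z_{i,2}, \dot{\hat W}_{i,j}, \dot{\hat\sigma}_i, \dot\Lambda_i, \dddot x_i^r, \ddot\alpha_i$) is a continuous function of signals already shown to be bounded in Stage 1, plus the bounded held control value $u_i(t_i^k)$ itself; the $\tanh$ derivative is bounded since $\mathrm{sech}^2 \le 1$ and $\dot z_{i,2}$ is bounded. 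Hence there is a constant $M_i > 0$ (depending only on the ultimate bounds and the design constants) with $\|\dot w_i(t)\| \le M_i$ on every inter-execution interval. Since $e_i(t_i^k) = 0$ (the error is reset at each trigger) and the next trigger requires $\|e_i(t)\| = \varsigma_i$, integration gives $\varsigma_i \le M_i (t_i^{k+1} - t_i^k)$, so $t_i^{k+1} - t_i^k \ge \varsigma_i / M_i =: t^* > 0$, which is the desired uniform lower bound and rules out Zeno behavior.

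\textbf{Main obstacle.} The routine part is the telescoping cancellation in $\dot V$; the delicate part is Stage 2's claim that $\|\dot w_i\|$ is uniformly bounded \emph{before} we know the triggering is non-Zeno — one must be careful that the bound $M_i$ is derived purely from the Stage-1 ultimate bounds (which hold regardless of the triggering pattern, since the $\|e_i\|\le\varsigma_i$ estimate used there is valid by the definition of the trigger rule itself) and does not secretly depend on a minimum dwell-time, avoiding circularity. A secondary subtlety is handling the nonsmooth $\kappa(z_{i,2})$ signum term: strictly one should argue with the Filippov/Clarke set-valued derivative or note that $z_{i,2,j}^T\mathrm{sgn}(z_{i,2,j}) = |z_{i,2,j}|$ makes the relevant term well-defined in the Lyapunov estimate even on the switching surface, so the $-cV+d$ inequality still holds in the appropriate (almost-everywhere or differential-inclusion) sense.
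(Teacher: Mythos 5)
Your proposal follows essentially the same route as the paper: the same composite Lyapunov function $V=V_0+\sum_i V_i$, the same use of the $\tanh$ inequality (the paper's Lemma~1, whose constant $\varepsilon^*$ is your $0.2785$) to absorb the bounded inter-trigger error $\|e_i\|\le\varsigma_i$ together with the $\overline{\varsigma}_i\tanh(\cdot)$ compensation term, the same $\dot V\le-\beta V+\mu$ conclusion yielding uniform ultimate boundedness, and the same Zeno-exclusion argument via $\frac{d}{dt}\|e_i\|\le\|\dot w_i\|\le\Psi_i$ with $e_i(t_i^k)=0$ and the error having to climb to the threshold. Your explicit cautions about avoiding circularity in the dwell-time bound and about the nonsmooth signum term are points the paper passes over silently, but they do not alter the argument.
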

 
  Prior to conducting the stability proof, the following lemmas and assumptions are indispensable.

    \begin{lemma}[\cite{lemma0,lemma1}]
        For any constants $\vartheta_1, \vartheta_2\in \mathbb{R}$, the following inequality $0\leq |\vartheta_1|-\vartheta_1\tanh(\frac{\vartheta_1}{\vartheta_2})\leq\varepsilon^*\vartheta_2$ holds, where $\varepsilon^*$ is a constant that satisfies $\varepsilon^*=e^{-(\varepsilon^*+1)}$.
    \end{lemma}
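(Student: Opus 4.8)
The plan is to prove the two inequalities separately, exploiting the facts that the quantity $|\vartheta_1|-\vartheta_1\tanh(\vartheta_1/\vartheta_2)$ is even in $\vartheta_1$ and that the stated characterization $\varepsilon^*=e^{-(\varepsilon^*+1)}$ is exactly the first-order optimality condition of the relevant one-variable function. Throughout I take $\vartheta_2>0$, as is implicit in the intended adaptive-control use (the right-hand side must be nonnegative for the bound to be meaningful).

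For the lower bound I would argue directly. Since $\tanh$ is odd, the product $\vartheta_1\tanh(\vartheta_1/\vartheta_2)=|\vartheta_1|\,\bigl|\tanh(\vartheta_1/\vartheta_2)\bigr|\geq 0$, and because $|\tanh(\cdot)|\leq 1$ this product is at most $|\vartheta_1|$. Hence $|\vartheta_1|-\vartheta_1\tanh(\vartheta_1/\vartheta_2)\geq 0$, with no further work required.

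For the upper bound the key step is a reduction followed by a clean rewriting. By evenness I may assume $\vartheta_1\geq 0$; setting $s=\vartheta_1/\vartheta_2\geq 0$ and using the hyperbolic identity $1-\tanh s=2/(e^{2s}+1)$, the left-hand side becomes $\vartheta_2\,g(s)$ with
\[
g(s)=s(1-\tanh s)=\frac{2s}{e^{2s}+1}.
\]
It then suffices to show $\max_{s\geq 0}g(s)=\varepsilon^*$. Differentiating, the condition $g'(s)=0$ reduces to $2s=1+e^{-2s}$. I would verify that this equation has a unique positive root $s^*$, since $s\mapsto 2s-1-e^{-2s}$ increases strictly from $-2$ at $s=0$ to $+\infty$; together with $g(0)=0$ and $g(s)\to 0$ as $s\to\infty$, this shows $g$ rises then falls, so $s^*$ is the global maximizer on $[0,\infty)$. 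Substituting the optimality relation back into $g$ gives $g(s^*)=2s^*-1=e^{-2s^*}$. Writing $\varepsilon^*:=g(s^*)$, the two expressions $\varepsilon^*=2s^*-1$ and $\varepsilon^*=e^{-2s^*}$ combine (via $2s^*=\varepsilon^*+1$) to yield precisely $\varepsilon^*=e^{-(\varepsilon^*+1)}$, matching the constant named in the statement. Thus $|\vartheta_1|-\vartheta_1\tanh(\vartheta_1/\vartheta_2)=\vartheta_2\,g(s)\leq\varepsilon^*\vartheta_2$.

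The only genuine obstacle is the transcendental optimization: confirming that the stationary point is a true global maximum and that the constant it produces coincides with the fixed point $\varepsilon^*=e^{-(\varepsilon^*+1)}$. Both are dispatched by the monotonicity argument above, which guarantees a unique critical point and vanishing boundary values, so no delicate estimates are needed. Once the identity $1-\tanh s=2/(e^{2s}+1)$ is invoked, the remainder is a routine single-variable calculus computation.
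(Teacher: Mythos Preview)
The paper does not supply its own proof of this lemma; it is quoted directly from the cited references \cite{lemma0,lemma1} and used as a black box in the proof of Theorem~1. Your argument is correct and is in fact the classical derivation: reduce by evenness to $s=\vartheta_1/\vartheta_2\geq 0$, maximize $g(s)=s(1-\tanh s)=2s/(e^{2s}+1)$, and read off the defining relation $\varepsilon^*=e^{-(\varepsilon^*+1)}$ from the first-order condition. Your explicit assumption $\vartheta_2>0$ is appropriate and matches how the lemma is actually applied in the paper (with $\vartheta_2=\varepsilon_{i,1},\varepsilon_{i,2}>0$), even though the stated hypothesis ``$\vartheta_2\in\mathbb{R}$'' is sloppier.
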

      \begin{ass}[\cite{mu}]
        The position function $x_i(t)$ of AV-$i$ is a differentiable and Lipschitz continuous function. There exists a positive constant $g_i\in R$ such that  $||x_i(t_k)-x_i(t)||\leq g_i(t-t_k)\leq g_{\max}(t-t_k)$ hold for all $i=1,\cdots, N$, with $g_{\max}=\mathop{\max}_{i\in N}\{g_i\}$. 
    \end{ass}
	\begin{ass}[\cite{mu}]
        We assume that the error between the sampled value and the actual value at the moment of sampling remains within a certain range $\varrho_i\in \mathbb{R}$ such that $||\overline{x}_i(t_k)-x_i(t_k)||\leq\varrho_i\leq\varrho_{\max}$ hold for all $i=1,\cdots, N$, with $\varrho_{\max}=\mathop{\max}_{i\in N}\{\varrho_i\}$. 
	\end{ass}

    \begin{proof}
         Consider about the fixed-threshold event-triggered strategy, from (\ref{equ9}), we have $|w_i(t)-u_i(t)|\leq\varsigma_i$. Therefore, there exists a time-varying parameter $\theta_i(t)\leq1$, satisfying $\theta_i(t_i^k)=0$ and $\theta_i(t_i^{k+1})=\pm1$, with $r_i=[1,1]^T$, one gets
	\begin{align}
		w_i(t)=\mu_i(t)+\theta_i\varsigma_ir_i\label{39}
	\end{align}

    Based on \emph{Lemma 1}, we define a constant $\varepsilon^*$, then we get
    \begin{align}
		&z_{i,2}^T(-\theta_i\varsigma_ir_i-\overline{\varsigma}_i\begin{bmatrix}
			\tanh(\frac{\overline{\varsigma}_iz_{i,2,1}}{\varepsilon_{i,1}})\\
			\tanh(\frac{\overline{\varsigma}_iz_{i,2,2}}{\varepsilon_{i,2}})
		\end{bmatrix})\leq\varepsilon^*(\varepsilon_{i,1}+\varepsilon_{i,2})
	\end{align}
	
	To address the convergence of all signals in the closed-loop system, we establish the Lyapunov function candidate for the entire system as follows:
	\begin{equation}
		\label{equ16}
		\begin{aligned}
			V=V_0+\sum\limits_{i=1}^{n} V_i.
		\end{aligned}
	\end{equation}

    Define $\lambda_{i,1}=C_{i,1}(\overline{x}_i-x_i)$ and $\lambda_{i,2}=C_{i,2}(\overline{x}_i-x_i)$. 
    Based on \emph{Assumption 1} and \emph{Assumption 2}, with unknown bound $\lambda_i^{\max}$ and $\varpi_i^{\max}$, we know that $\overline{x}_i(t_k)-x_i(t)$ is bounded, so $\lambda_{i,1}$ is bounded, with $||\lambda_{i,1}||\leq \lambda_i^{\max}\in \mathbb{R}$. Similarly, $\sigma_i-\lambda_{i,2}$ is also bounded, with $||\sigma_i-\lambda_{i,2}||\leq\varpi_i^{max}\in \mathbb{R}$. Then taking the derivative of $V_0$ and the Young's inequality, we obtain
    \begin{align}
		\dot{V}_0\leq&-\phi_{\min}(Q)||e||^2-3||P||^2||e||^2+\sum\limits_{i=1}^{n}\sum\limits_{j=1}^{2}\widetilde{W}_{i,j}^T\widetilde{W}_{i,j} \nonumber \\
		&+\sum\limits_{i=1}^{n}(\lambda_i^{\max})^2+\sum\limits_{i=1}^{n}(\varpi_i^{\max})^2, \label{45}
	\end{align}
	where $\phi_{\min}(Q)$ is the minimum eigenvalue of the positive definite matrix $Q$, and in this paper, $\phi_{\min}(\cdot)$ presents the minimum eigenvalue of the positive definite matrix $(\cdot)$.
	
	Taking the derivative of $V_i$ and the Young's inequality, with the unknown bound of the optimal weight vector $\overline{W}_{i,j}^*$, we obtain
	\begin{equation}
		\label{equ17}
		\begin{aligned}
			\dot{V}_i&\leq -K_{i,1}z_{i,1}^Tz_{i,1}-K_{i,2}z_{i,2}^Tz_{i,2}+\frac{1}{2}\sum_{j=1}^2\Xi_{i,j}(\overline{W}_{i,j}^*)^2\\&-\frac{1}{2}\sum_{j=1}^2 
			\Xi_{i,j}\phi_{\min}(O_{i,j})\widetilde{W}_{i,j}^TO_{i,j}^{-1}\widetilde{W}_{i,j}+\varepsilon^*(\varepsilon_{i,1}+\varepsilon_{i,2}) \\
			&-\frac{1}{2}\phi_{\min}(\Upsilon_i\Delta_i)\widetilde{\sigma}_i^T\Delta_i^{-1}\widetilde{\sigma}_i+\frac{1}{2}(\sigma_i-\sigma_i^0)^T\Upsilon_i(\sigma_i-\sigma_i^0).
		\end{aligned}
	\end{equation}

    \begin{figure*}
            \centering
            \subfloat{\includegraphics[width=0.333\linewidth]{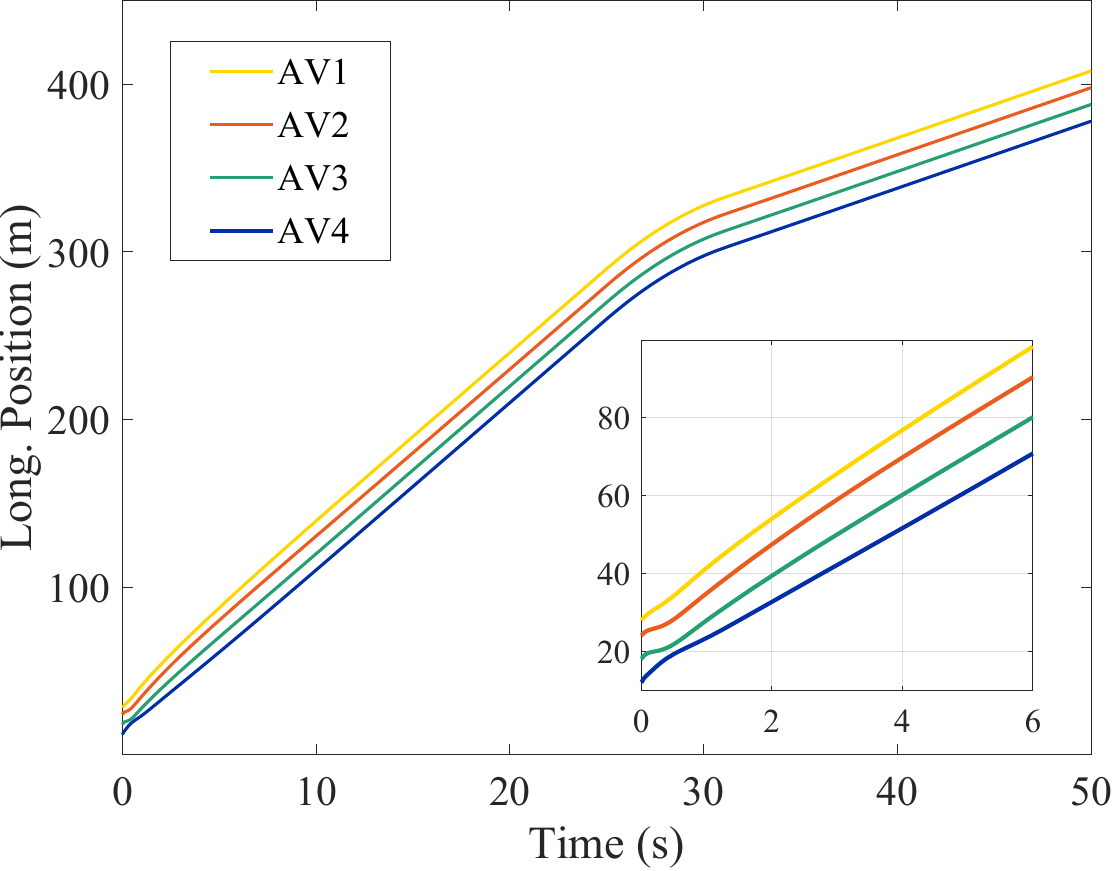}}
            \subfloat{\includegraphics[width=0.333\linewidth]{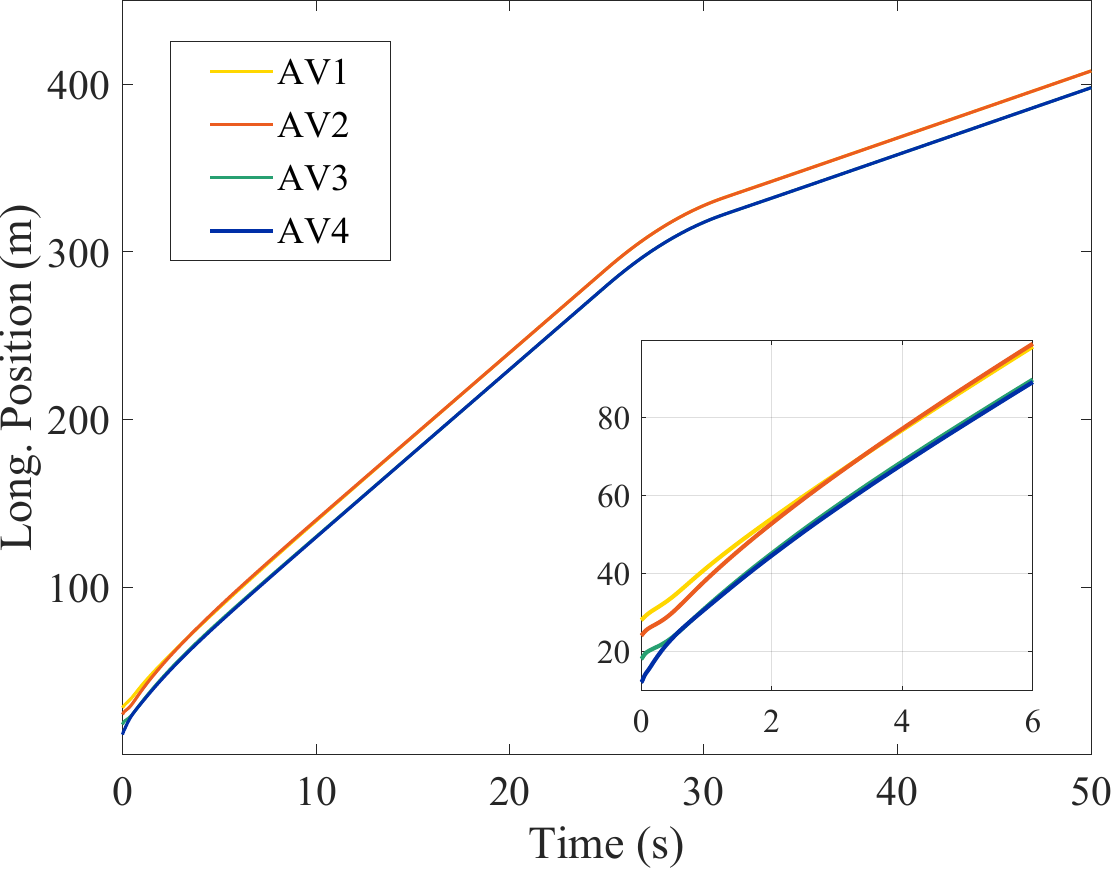}}
            \subfloat{\includegraphics[width=0.333\linewidth]{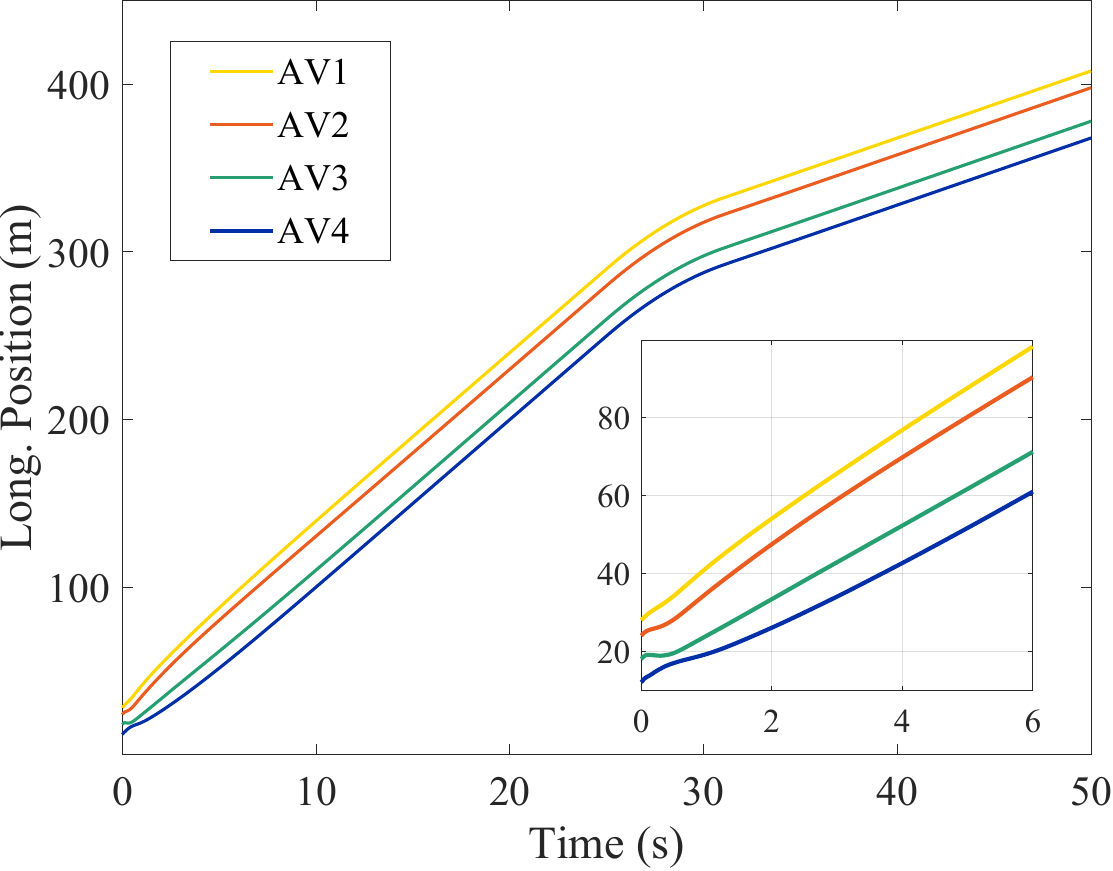}}\\
            \setcounter{subfigure}{0}
            \subfloat[Linear formation]{\includegraphics[width=0.33\linewidth]{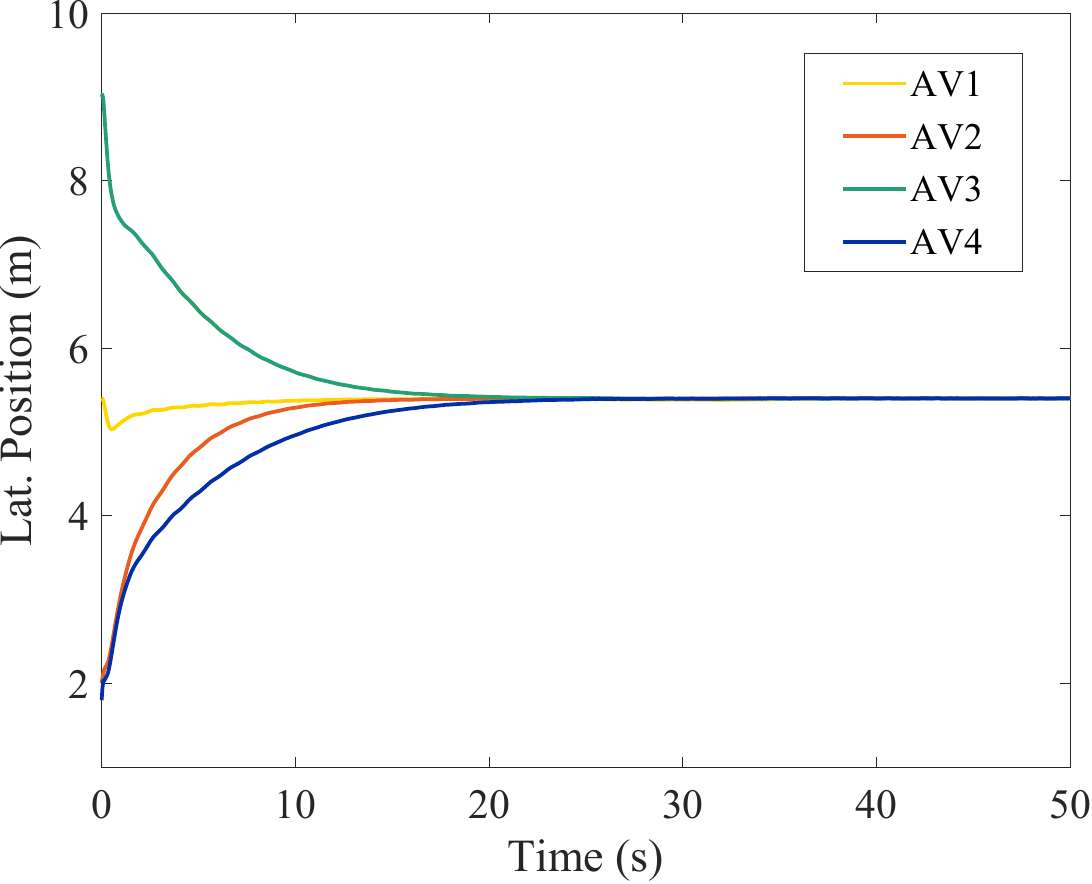}}
            \subfloat[ Square formation]{\includegraphics[width=0.33\linewidth]{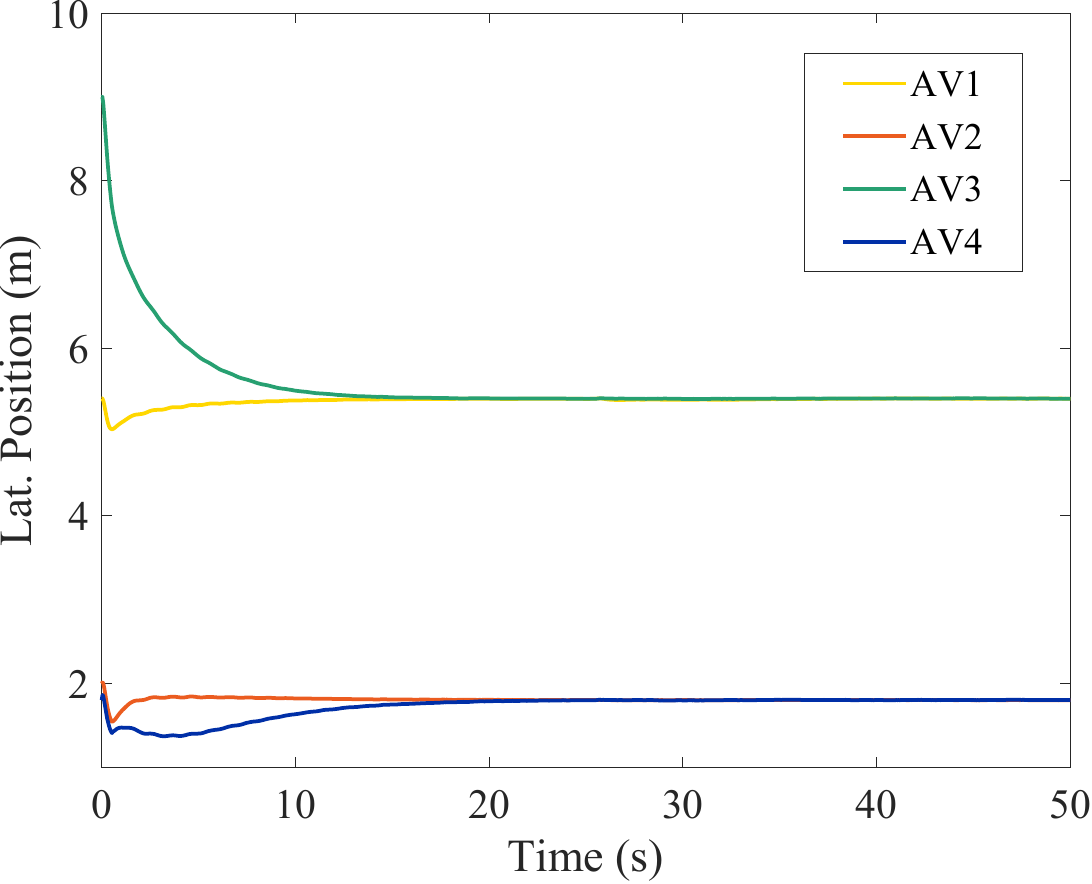}}
            \subfloat[Linear-queue formation]{\includegraphics[width=0.33\linewidth]{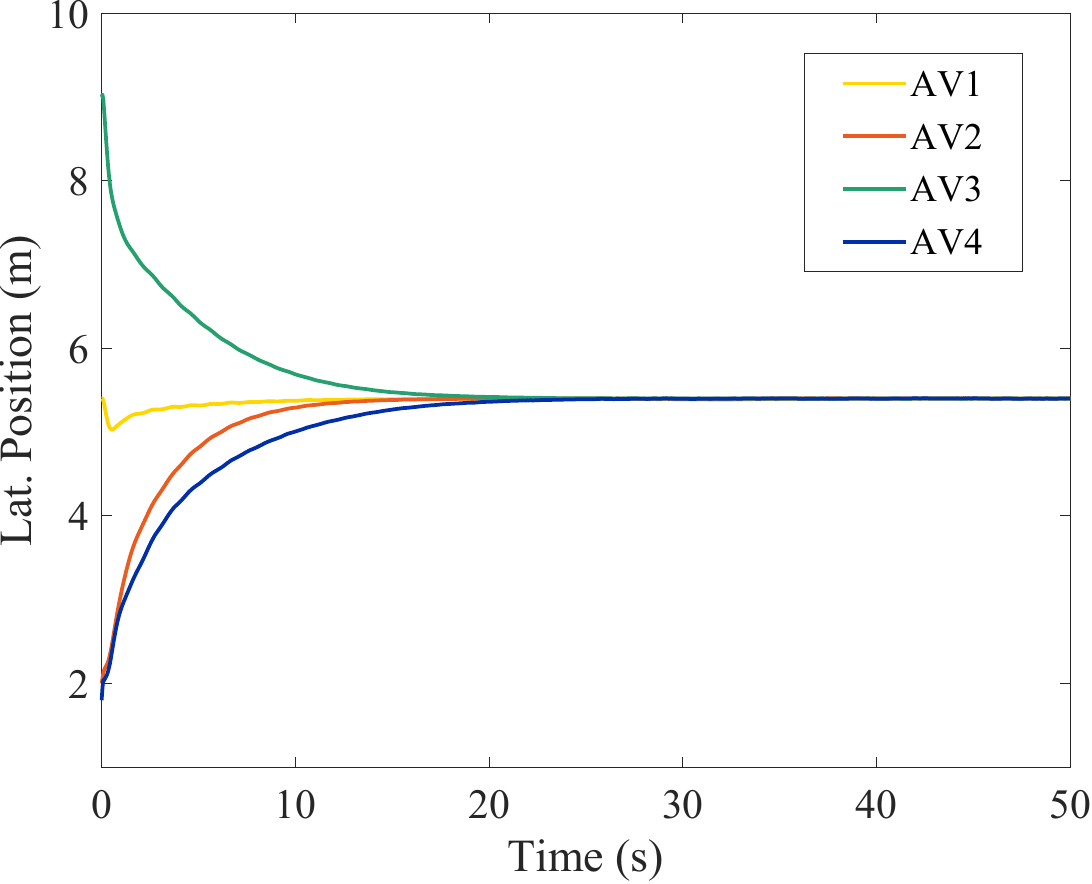}}
            \caption{Longitudinal and lateral tracking performance of the vehicle formation control in three-fold scenarios.}
            \label{tracking}
        \end{figure*}
	
    Taking the derivative of $V$ and based on the Young's inequality, we obtain
	\begin{equation}
		\label{equ18}
		\begin{aligned}
			\dot{V}&\leq-\beta V+\mu,\\
			\beta&=\min\{\phi_{\min}(K_{i,1}),\phi_{\min}(K_{i,2}),\phi_{\min}(\Upsilon_i\Delta_i), \\
			&\quad\Xi_{i,j}\phi_{\min}(O_{i,j})-2\phi_{\max}(O_{i,j}), \\
			&\quad(\phi_{\min}(Q)-3||P||^2)\phi_{\min}(P^{-1})\},\\
			\mu&=\sum_{i=1}^n[(\lambda_i^{\max})^2+(\varpi_i^{\max})^2+\frac{1}{2}\sum_{j=1}^2\Xi_{i,j}(\overline{W}_{i,j}^*)^2 \\
			&\quad+\varepsilon^*(\varepsilon_{i,1}+\varepsilon_{i,2})+\frac{1}{2}(\sigma_i-\sigma_i^0)^T\Upsilon_i(\sigma_i-\sigma_i^0)].
		\end{aligned}
	\end{equation}

	Then, it satisfies
	\begin{equation}
		\label{equ19}
		\begin{aligned}
			0\leq V(t)\leq\frac{\mu}{\beta}+(V(0)-\frac{\mu}{\beta})e^{-\beta t}.
		\end{aligned}
	\end{equation}
	
	Therefore, according to the properties of positive definite matrices, it is corroborated that the observer error $e$, tracking errors $z_{i,1}$ and $z_{i,2}$, parameter estimation errors $\widetilde{W}_{i,j}$ and $\widetilde{\sigma}_i$ are all bounded, i.e., $||e||^2\leq\frac{2V}{\phi_{\min}(P)}$, $||z_{i,1}||^2\leq2V$, $||z_{i,2}||^2\leq2V$, $||\widetilde{W}_{i,j}||^2\leq\frac{2V}{\phi_{\min}(O_{i,j}^{-1})}$ and $||\widetilde{\sigma}_{i,j}||^2\leq\frac{2V}{\phi_{\min}(\Delta_{i,j}^{-1})}$.

	Now we show that there exists a $t^*>0$  such that $\forall k\in Z^+$, $\{t_{k+1}-t_k\}\geq t^*$. To this end, by recalling $e_i(t)=w_i(t)-u_i(t)$, during the period $t_i^k\leq t_i<t_i^{k+1}$, we obtain
	\begin{equation}
		\label{equ19}
		\begin{aligned}
			\frac{d}{dt}||e_i||&=\frac{d}{dt}(e_i^T\times e_i)^{\frac{1}{2}}\\
			&=[{\rm sgn}(e_{i,1}),{\rm sgn}(e_{i,2})]^T	\dot{e}_i\leq||\dot{w}_i||,
		\end{aligned}
	\end{equation}
	where $e_{i,j}$ is the $j$th component of vector $e_i$, $j=1,2$.
	
	From ($\ref{equ8}$), we know that $w_i$ is continuous and bounded, and through the standard analysis in \cite{mu}, all closed-loop signals are bounded. Therefore, there must exist a positive unknown constant $\Psi_i\in \mathbb{R}$. Noting that $e_i(t_i^k)=0$ and $\lim_{t\rightarrow t_i^k}e_i(t)=\overline{\varsigma}_i$, we obtain that the event-triggered lower bound of inter-execution intervals $t_i^*$ must satisfy $t_i^*\geq(\overline{\varsigma}_i/\Psi_i)$, namely, the Zeno-behavior is successfully avoided.
    \end{proof}

    \begin{thm}
    The closed-loop system $(\ref{equ1})$, incorporating the continuous-in-time controller ($\ref{equ7}$) and the updated controller ($\ref{equ10}$) under the relative-threshold strategy ($\ref{equ11}$), ensures bounded tracking errors and signals for the AVs formation. Furthermore, the Zeno behavior is excluded.
    \end{thm}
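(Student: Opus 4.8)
The plan is to re-run the Lyapunov argument of Theorem~1, isolating the one place where the relative threshold $(\ref{equ11})$ differs from the fixed one. The only genuinely new feature is that the triggering slack now depends on the held control $u_i$ itself: on an inter-execution interval $[t_i^k,t_i^{k+1})$ we only know $||e_i(t)||=||w_i(t)-u_i(t)||\le\zeta_i|u_i(t)|+\xi_i$. First I would linearize this relation by introducing bounded time-varying parameters $\lambda_{i,1}(t),\lambda_{i,2}(t)$ with $|\lambda_{i,1}(t)|\le1$, $|\lambda_{i,2}(t)|\le1$ and $\lambda_{i,1}(t_i^k)=\lambda_{i,2}(t_i^k)=0$ (since $e_i(t_i^k)=0$), so that componentwise $w_i(t)=(1+\lambda_{i,1}(t)\zeta_i)u_i(t)+\lambda_{i,2}(t)\xi_i r_i$, hence $u_i(t)=\big(w_i(t)-\lambda_{i,2}(t)\xi_i r_i\big)/(1+\lambda_{i,1}(t)\zeta_i)$. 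This is the relative-threshold analogue of $(\ref{39})$. Because $0<\zeta_i<1$, the scalar $1+\lambda_{i,1}(t)\zeta_i$ stays in $[1-\zeta_i,1+\zeta_i]$, so it is bounded away from zero and the ratio $(1+\zeta_i)/(1+\lambda_{i,1}\zeta_i)$ lies in $[1,(1+\zeta_i)/(1-\zeta_i)]$; this is exactly what makes the design $(\ref{equ10})$ usable.

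Next I would substitute the explicit form of $w_i$ from $(\ref{equ10})$ into this expression and estimate the cross term $z_{i,2}^Tu_i$ that enters $\dot V_i$. Since $z_{i,2,j}\mu_{i,j}\tanh(\mu_{i,j}z_{i,2,j}/\varepsilon_{i,j})\ge0$ and $\overline{\xi}_iz_{i,2,j}\tanh(\overline{\xi}_iz_{i,2,j}/\varepsilon_{i,j})\ge0$, dividing the nonpositive quantity $z_{i,2,j}w_{i,j}$ by $1+\lambda_{i,1}\zeta_i\le1+\zeta_i$ only helps, and applying \emph{Lemma 1} to each $\tanh$ gives a bound of the shape $z_{i,2,j}u_{i,j}\le-|z_{i,2,j}\mu_{i,j}|-\big(\overline{\xi}_i-\tfrac{\xi_i}{1-\zeta_i}\big)|z_{i,2,j}|+2\varepsilon^*\varepsilon_{i,j}$. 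The middle term is nonpositive precisely because the design demands $\overline{\xi}_i>\xi_i/(1-\zeta_i)$, and $-|z_{i,2,j}\mu_{i,j}|\le z_{i,2,j}\mu_{i,j}$, so summing over $j$ yields $z_{i,2}^Tu_i\le z_{i,2}^T\mu_i+2\varepsilon^*(\varepsilon_{i,1}+\varepsilon_{i,2})$: the triggered controller acts like the continuous controller $\mu_i$ up to a constant residual, exactly as in Theorem~1, only with a factor $2$ in front of $\varepsilon^*(\varepsilon_{i,1}+\varepsilon_{i,2})$ because $(\ref{equ10})$ carries two nested $\tanh$ terms.

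From here the proof is a verbatim re-run of Theorem~1. I would take the same Lyapunov function $V=V_0+\sum_{i=1}^nV_i$ of $(\ref{equ16})$. The bound $(\ref{45})$ on $\dot V_0$ is untouched, since the observer $(\ref{equ3})$ and hence the observer-error dynamics do not involve the triggering rule (the term $u_i$ cancels in $\dot e_{i,2}$). For $\dot V_i$ I would carry out the backstepping cancellations with the update laws $(\ref{equ7})$ and Young's inequality exactly as for $(\ref{equ17})$, inserting the bound on $z_{i,2}^Tu_i$ from the previous step; this produces $\dot V_i$ of the form $(\ref{equ17})$ with $\varepsilon^*(\varepsilon_{i,1}+\varepsilon_{i,2})$ replaced by $2\varepsilon^*(\varepsilon_{i,1}+\varepsilon_{i,2})$. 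Summing and combining with $\dot V_0$ gives $\dot V\le-\beta V+\mu$ with $\beta$ as in $(\ref{equ18})$ and $\mu$ enlarged accordingly, so the comparison lemma yields $0\le V(t)\le\mu/\beta+(V(0)-\mu/\beta)e^{-\beta t}$, and reading off $||e||^2\le2V/\phi_{\min}(P)$, $||z_{i,1}||^2\le2V$, $||z_{i,2}||^2\le2V$, $||\widetilde W_{i,j}||^2\le2V/\phi_{\min}(O_{i,j}^{-1})$ and $||\widetilde\sigma_i||^2\le2V/\phi_{\min}(\Delta_i^{-1})$ proves boundedness of the tracking errors and all closed-loop signals. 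For Zeno exclusion I would again use $\frac{d}{dt}||e_i||=[{\rm sgn}(e_{i,1}),{\rm sgn}(e_{i,2})]^T\dot e_i\le||\dot w_i||$ on $[t_i^k,t_i^{k+1})$; since $(\ref{equ10})$ is a smooth function of signals already shown bounded, $||\dot w_i||\le\Psi_i$ for some constant $\Psi_i>0$, and since $e_i(t_i^k)=0$ while the next event needs $||e_i(t)||\ge\zeta_i|u_i(t)|+\xi_i\ge\xi_i$, the inter-execution interval satisfies $t_i^{k+1}-t_i^k\ge\xi_i/\Psi_i>0$; taking $t^*=\min_i\xi_i/\Psi_i$ excludes Zeno behavior.

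The main obstacle is the first step. Unlike Theorem~1, where $w_i-u_i$ is a uniformly bounded additive perturbation $\theta_i\varsigma_ir_i$, here the slack is proportional to the as-yet-unknown held control $u_i$, so one must (i) introduce the $\lambda_{i,1},\lambda_{i,2}$ reparametrization and invert the affine relation for $u_i$, (ii) confirm $1+\lambda_{i,1}\zeta_i$ is bounded away from zero, which is where $0<\zeta_i<1$ is essential, and (iii) verify that the particular nested-$\tanh$ form $(\ref{equ10})$ with $\overline{\xi}_i>\xi_i/(1-\zeta_i)$ makes $z_{i,2}^T(u_i-\mu_i)$ bounded above by a constant, so that the stabilizing $-\phi_{\min}(K_{i,2})||z_{i,2}||^2$ term survives. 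Once that estimate is established, everything downstream is mechanically the same as the fixed-threshold proof.
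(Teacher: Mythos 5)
Your proposal follows essentially the same route as the paper's proof: it reparametrizes the inter-event slack with bounded time-varying parameters (your $\lambda_{i,1},\lambda_{i,2}$ are the paper's $\pi_1,\pi_2$), inverts the affine relation for $u_i$, applies \emph{Lemma 1} to the nested $\tanh$ terms in $(\ref{equ10})$ to recover $z_{i,2}^Tu_i\le z_{i,2}^T\mu_i+2\varepsilon^*(\varepsilon_{i,1}+\varepsilon_{i,2})$, re-runs the Theorem~1 Lyapunov argument, and excludes Zeno via a bound on $\|\dot w_i\|$. If anything, you spell out steps the paper leaves terse --- notably the role of $0<\zeta_i<1$ and $\overline{\xi}_i>\xi_i/(1-\zeta_i)$, and the explicit lower bound $t^*\ge\xi_i/\Psi_i$ for the inter-execution interval --- so the proposal is correct and consistent with the paper.
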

    \begin{proof}
       Consider about the relative-threshold event-triggered strategy, from (\ref{equ11}), with time-varying parameters $\pi_1(t)$ and $\pi_2(t)$ satisfying $|\pi_1(t)|\leq1$ and $|\pi_2(t)|\leq1$, we have
	\begin{align}
		w_i(t)=(1+\pi_1(t)\zeta_i)\mu_i(t)+\pi_2(t)\xi_i. \label{53}
	\end{align}

    By applying \emph{Lemma 1}, we get
    \begin{equation}
		\label{equ_relative}
		\begin{aligned}
			z_{i,2}\frac{w_i(t)-\pi_2(t)\xi_i }{1+\pi_1(t)\zeta_i}\leq 2\varepsilon^*(\varepsilon_{i,1}+\varepsilon_{i,2})
		\end{aligned}.
	\end{equation}
	
	Following the same analysis in the proof of Theorem 1, we get that all the closed-loop signals are globally bounded.
	
	Finally, we are at the step to decide the lower bound of the inter-execution time $t^*$. Still following the analysis in Theorem 1 we have $t^*\geq(\zeta_i|u_i(t)|+\xi_i)/\Psi_i$, which is lower bounded. This means that the Zeno-behavior is successfully avoided.
    \end{proof}
	
        \begin{thm}
        The closed-loop system $(\ref{equ1})$, incorporating the continuous-in-time controller ($\ref{equ7}$) and the updated controller ($\ref{equ8}$)($\ref{equ10}$) under switched-threshold strategy ($\ref{equ14}$), including both the fixed-threshold strategy and the relative-threshold strategy, ensures bounded tracking errors and signals for the AVs formation. Furthermore, the Zeno behavior is excluded.
        \end{thm}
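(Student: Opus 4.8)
The plan is to reuse, branchwise, the two-part template already executed for Theorems~1 and~2: first a Lyapunov argument establishing uniform ultimate boundedness of all closed-loop signals, then a dwell-time argument excluding Zeno behavior. The key preliminary observation is that on each inter-execution interval $[t_i^k,t_i^{k+1})$ the held control $u_i(t)\equiv w_i(t_i^k)$ is constant, so the predicate ``$|u_i(t)|<S$'' (relative branch) versus ``$|u_i(t)|\ge S$'' (fixed branch) in $(\ref{equ14})$ is itself constant on that interval. Consequently, exactly one of the two previously analysed mechanisms is active on each interval, and no mode change occurs \emph{between} triggers; the switching signal only changes value at (a subset of) the trigger instants.

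First I would establish boundedness. On every interval on which the fixed branch is active, $|w_i(t)-u_i(t)|\le\varsigma_i$, so the representation $(\ref{39})$ and the Lemma~1 estimate of the proof of Theorem~1 apply; on every interval on which the relative branch is active, $|w_i(t)-u_i(t)|\le\zeta_i|u_i(t)|+\xi_i$, so the representation $(\ref{53})$ and the bound $(\ref{equ_relative})$ of Theorem~2 apply. In either case, differentiating the common Lyapunov candidate $V=V_0+\sum_{i=1}^n V_i$ (with $\dot V_0$ bounded exactly as in $(\ref{45})$, which is independent of the triggering rule, and $\dot V_i$ as in $(\ref{equ17})$) yields $\dot V\le-\beta V+\mu$ with the \emph{same} decay rate $\beta$ as in $(\ref{equ18})$ and with $\mu$ taken as the larger of the two residual constants produced by the fixed and the relative branches. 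Since $V$ and all closed-loop states are continuous across the switching instants (nothing jumps), the scalar comparison inequality $(\ref{equ19})$ integrates over the whole time axis, and the uniform bounds $(\ref{equ15})$ on $e$, $z_{i,1}$, $z_{i,2}$, $\widetilde{W}_{i,j}$ and $\widetilde{\sigma}_i$ follow verbatim from the Theorem~1 argument.

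Next I would exclude Zeno behavior, again branchwise. Boundedness of all closed-loop signals makes both $w_i$ in $(\ref{equ8})$ and $w_i$ in $(\ref{equ10})$ continuously differentiable functions of bounded arguments, so there exists a single positive constant $\Psi_i$ with $\|\dot w_i(t)\|\le\Psi_i$ regardless of the active branch. Combining this with $\tfrac{d}{dt}\|e_i\|\le\|\dot w_i\|$ and $e_i(t_i^k)=0$, a trigger cannot fire before $\|e_i\|$ reaches the smaller of the two branch thresholds, namely $\varsigma_i$ or $\xi_i$ (the relative threshold $\zeta_i|u_i(t)|+\xi_i$ being at least $\xi_i$ since $|u_i(t)|\ge0$). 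Hence $t_i^{k+1}-t_i^k\ge\min\{\varsigma_i,\xi_i\}/\Psi_i=:t_i^*>0$, and $t^*:=\min_{i\in N}t_i^*$ is a uniform positive lower bound on the inter-execution intervals of the entire formation, which rules out Zeno behavior.

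The main obstacle is the bookkeeping across mode changes rather than the re-derivation of $(\ref{equ17})$--$(\ref{equ18})$: one has to confirm that the state-dependent switching signal (governed by $|w_i(t_i^k)|$ versus $S$) does not interact badly with the comparison argument --- that $\beta$ can indeed be chosen as the common value in $(\ref{equ18})$ for both branches and $\mu$ as their branchwise maximum, so that $\dot V+\beta V-\mu\le0$ holds on every subinterval irrespective of how often the rule toggles, and that the minimum-dwell-time estimate survives the toggling because $\Psi_i$ and $\min\{\varsigma_i,\xi_i\}$ are themselves switch-independent. Once these uniformity points are secured, the statement follows by patching together the Theorem~1 and Theorem~2 estimates interval by interval.
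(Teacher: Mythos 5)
Your proposal takes essentially the same route as the paper: the paper's proof of Theorem~3 simply invokes the boundedness analyses of Theorems~1 and~2 branchwise and then combines the two branch-specific dwell-time estimates to exclude Zeno behavior. If anything, you are more careful on two points the paper glosses over --- you note that the active branch is constant on each inter-execution interval (since $u_i$ is held at $w_i(t_i^k)$), and you take the \emph{minimum} $\min\{\varsigma_i,\xi_i\}/\Psi_i$ of the two branch lower bounds as the uniform dwell time, whereas the paper writes a $\max$ of the two quantities, which does not by itself furnish a lower bound valid in whichever branch happens to be active.
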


	\begin{proof}
	    Following the same analysis in the proof of Theorem 1 and Theorem 2, we get that all the closed-loop signals are globally bounded. Since the switched-threshold strategy adopts the same control laws as that in the initial two strategies, it is easy to get the time interval satisfy $t^*>\max\{(\overline{\varsigma}_i/\Psi_i,(\zeta_i|u_i(t)|+\xi_i)/\Psi_i\}$. This means that the Zeno-behavior is successfully avoided.
	\end{proof}

	\section{Illustrative Example}\label{sec5}
	In this section, an illustrative example is provided to validate the AVs formation control algorithm. Section V-A contains the  specification of all designed parameters utilized in the control algorithm. Section V-B demonstrates the tracking performance and the effects of the multi-threshold event-triggered mechanism in the AVs formation control algorithm. And we further analyze the safety and mobility performance of the controller in Section V-C.

    \begin{figure*}
            \centering
            \includegraphics[width=0.99\textwidth]{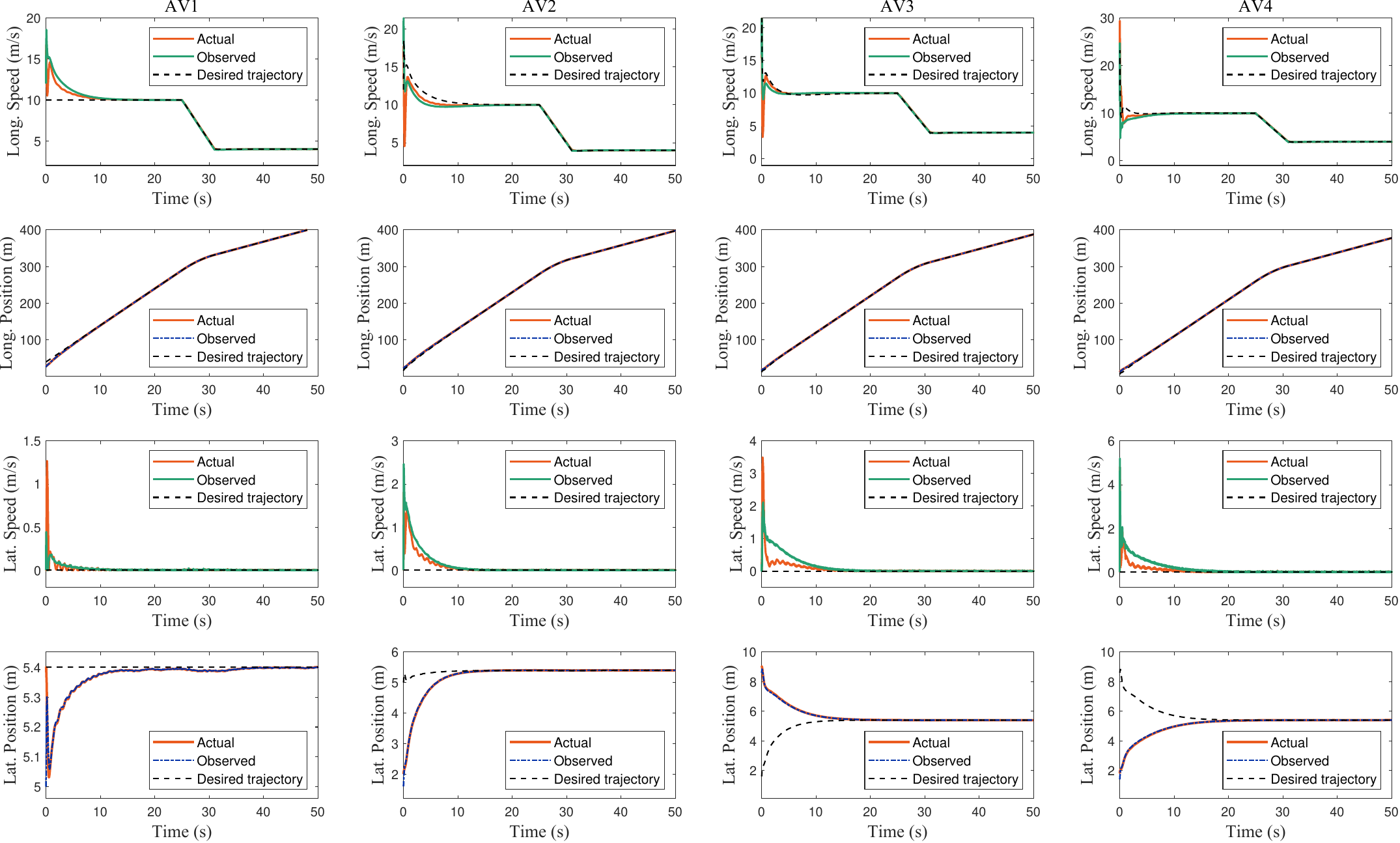}
            \caption{Comparative analysis of actual-observed-desired longitudinal/lateral positions and speeds for AVs in the linear formation under relative-threshold strategy. Each column provides the profiles of positions and speeds of one vehicle.}
            \label{speed&position}
          \end{figure*}

         \begin{figure}
            \centering
            \includegraphics[width=0.45\textwidth, height=0.27\textheight]{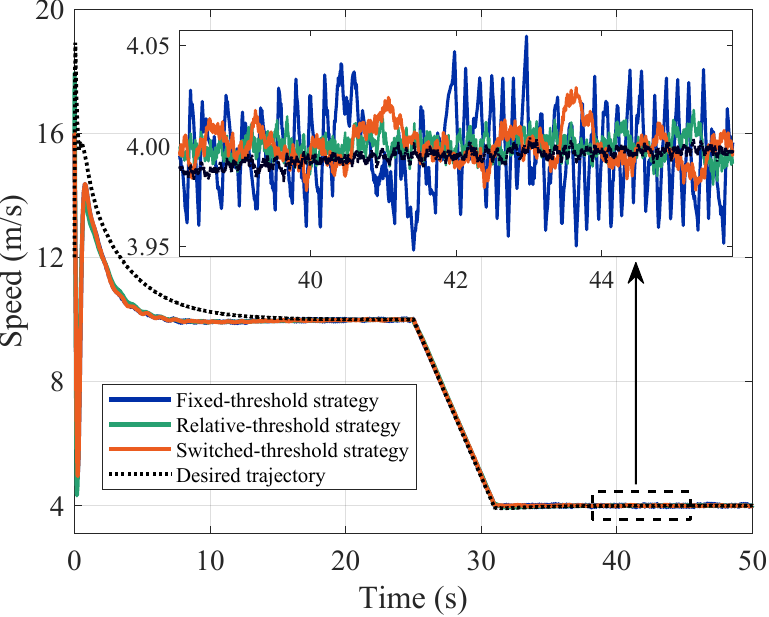}
            \caption{Comparison of longitudinal tracking speed of AV-$2$ under different event-triggered strategies in the linear formation.}
            \label{comparison}
        \end{figure}

        \begin{figure*}
            \centering
            \subfloat[ Fixed-threshold strategy]{\includegraphics[width=0.333\linewidth]{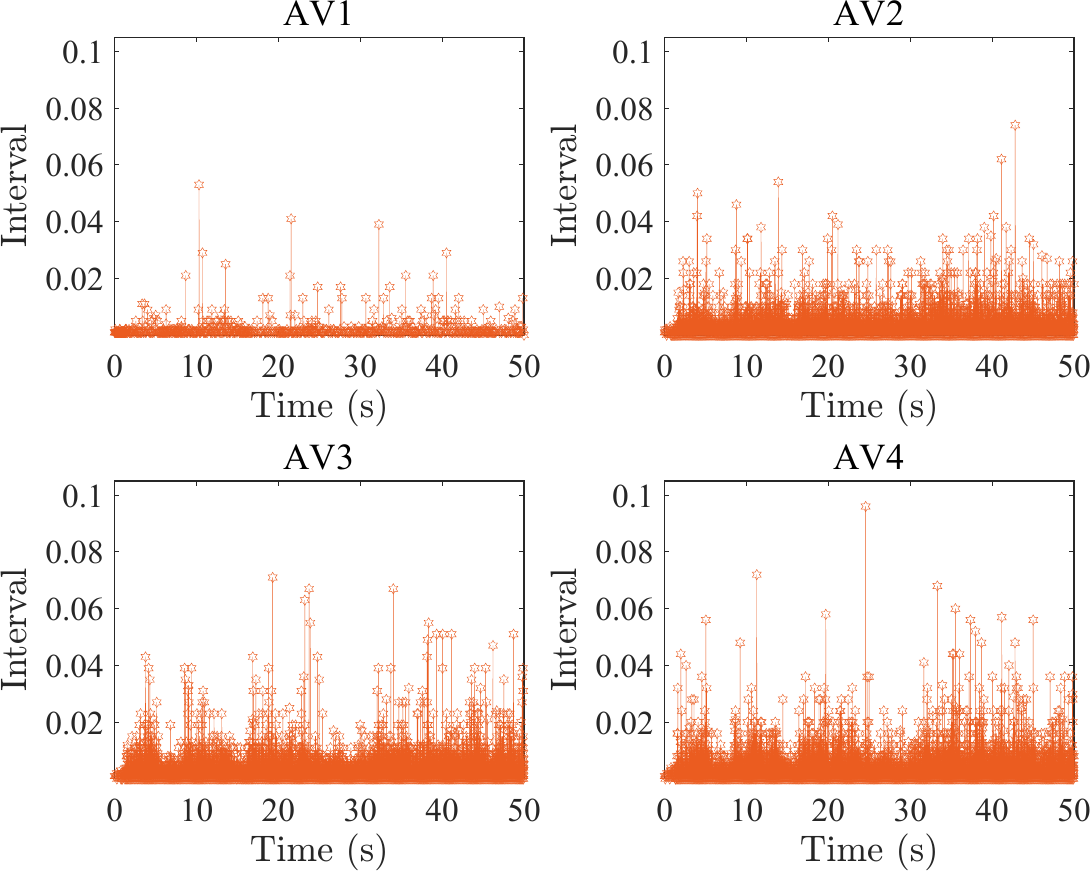}}
            \subfloat[Relative-threshold strategy]{\includegraphics[width=0.333\linewidth]{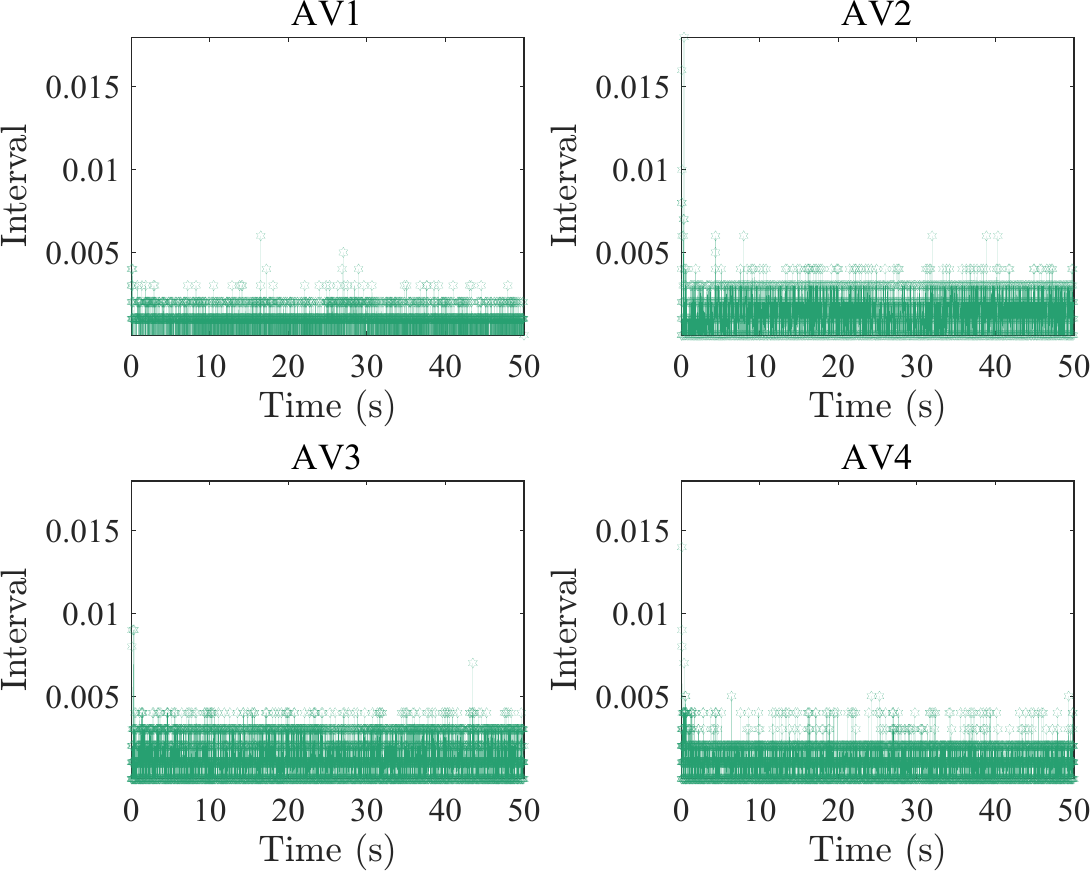}}
            \subfloat[Switched-threshold strategy]{\includegraphics[width=0.333\linewidth]{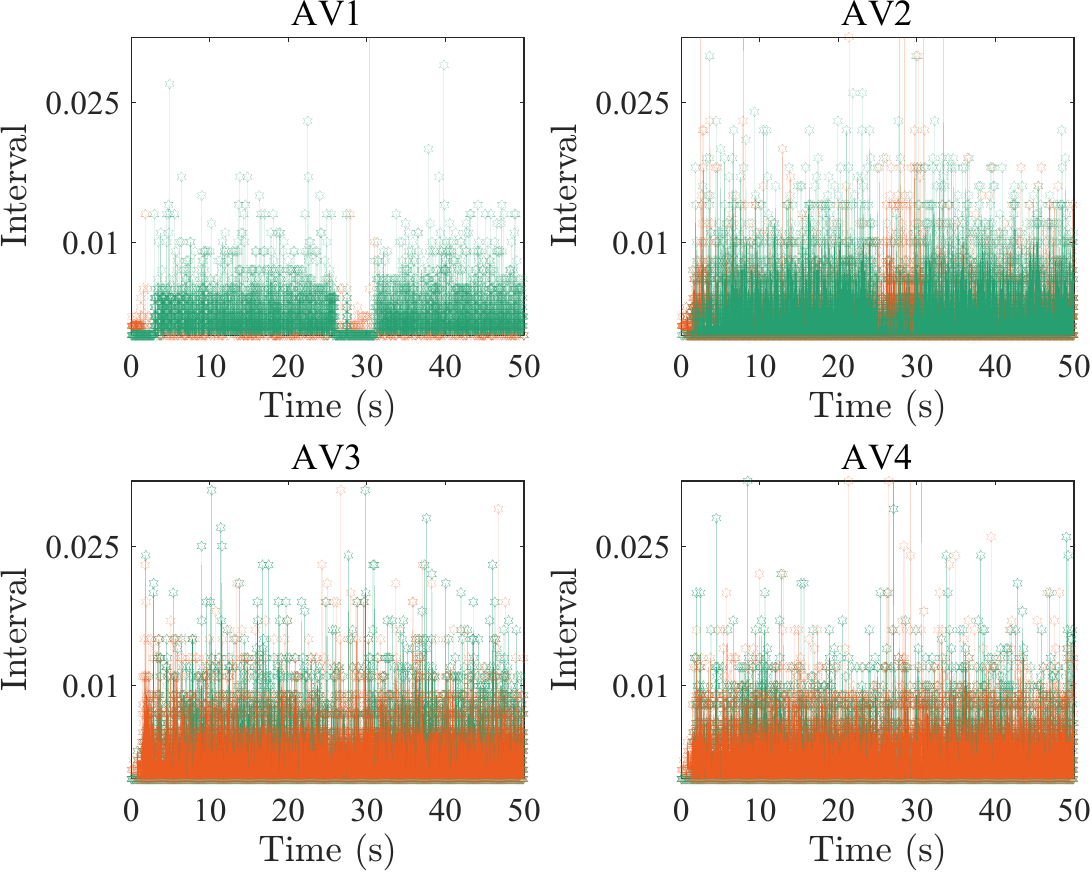}}
            \caption{The longitudinal update time interval of control laws for the AVs in linear formation.}
            \label{interval}
        \end{figure*}

        \begin{figure}
		\centering
		\includegraphics[width=0.48\textwidth, height=0.23\textheight]{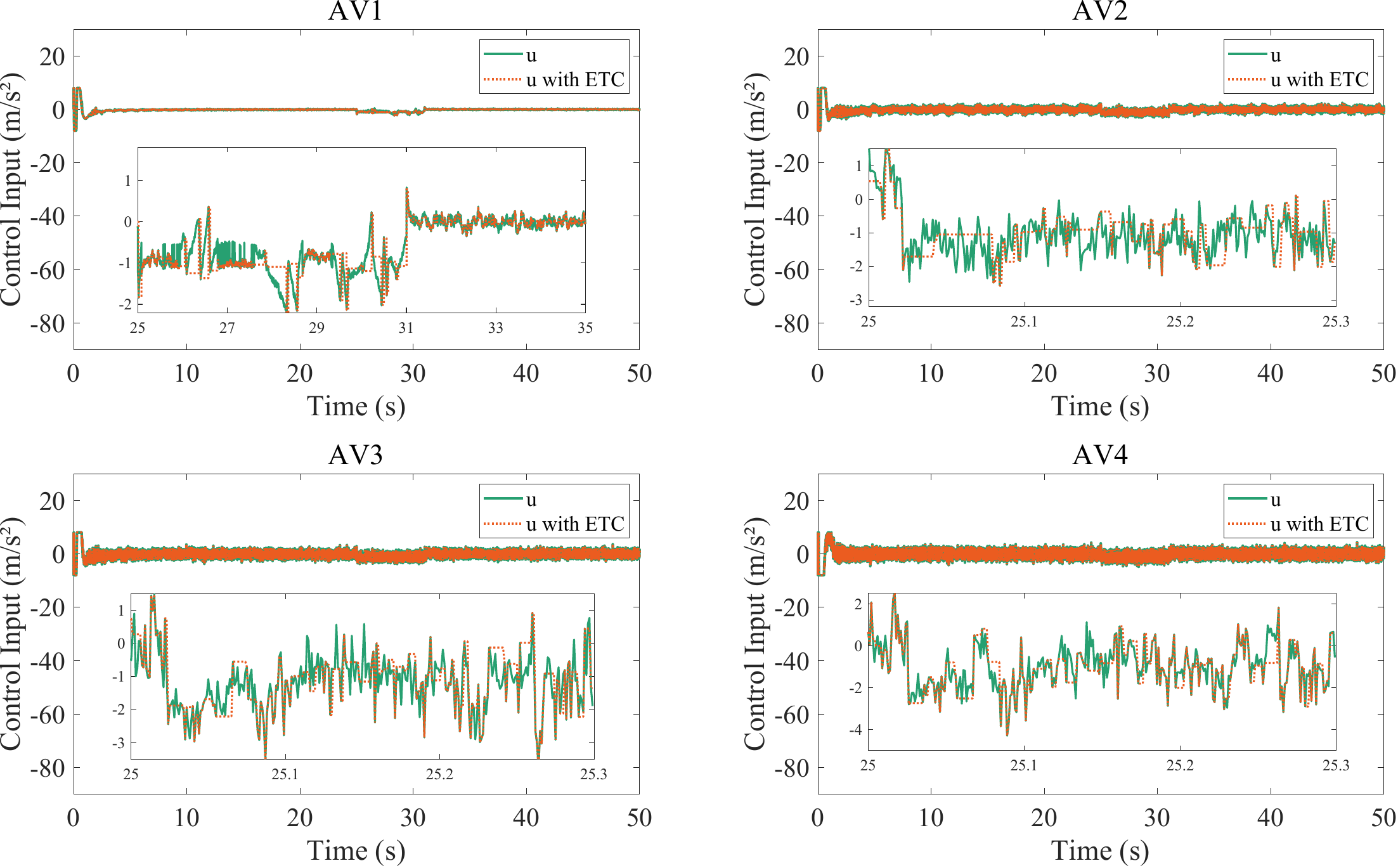}
		\caption{Performance of control inputs for AVs in linear formation under fixed-threshold strategy.}
		\label{controller}
	\end{figure}
    
    \begin{figure*}
            \centering
            \subfloat[Linear formation]{\includegraphics[width=0.333\linewidth]{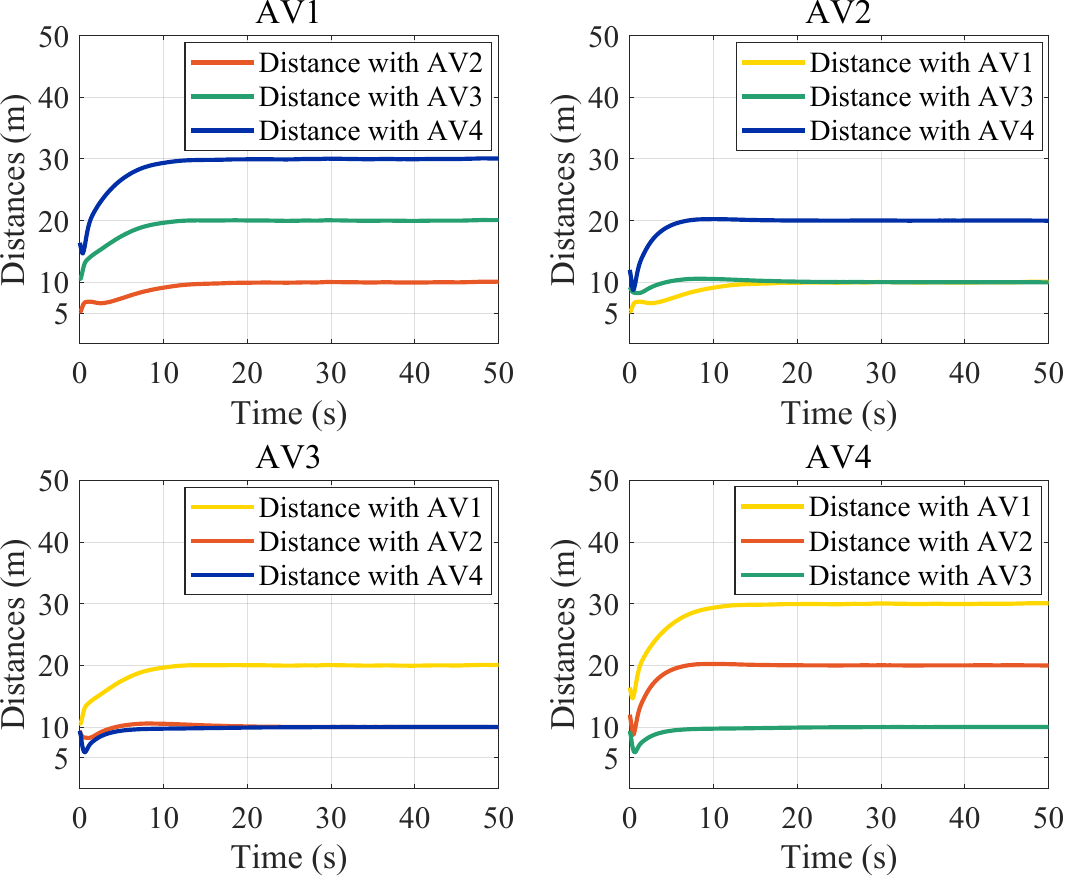}}
            \subfloat[Square formation]{\includegraphics[width=0.333\linewidth]{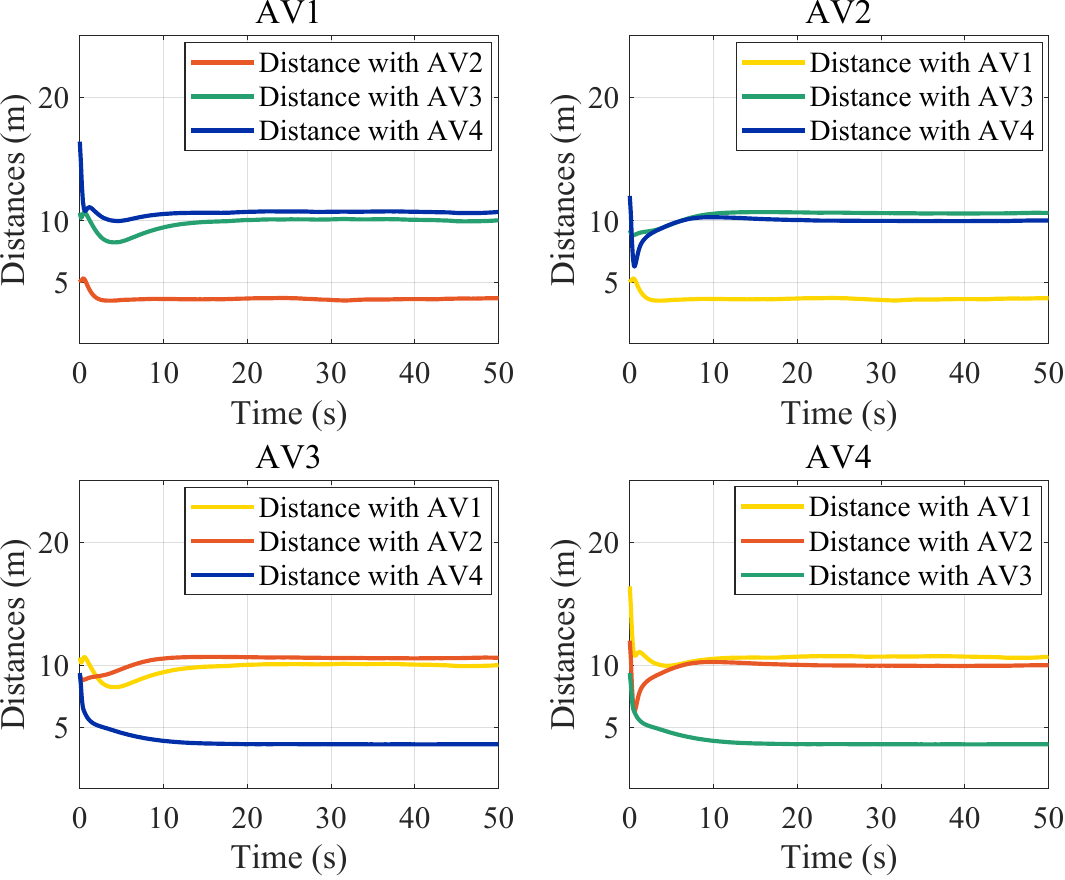}}
            \subfloat[Linear-queue formation]{\includegraphics[width=0.333\linewidth]{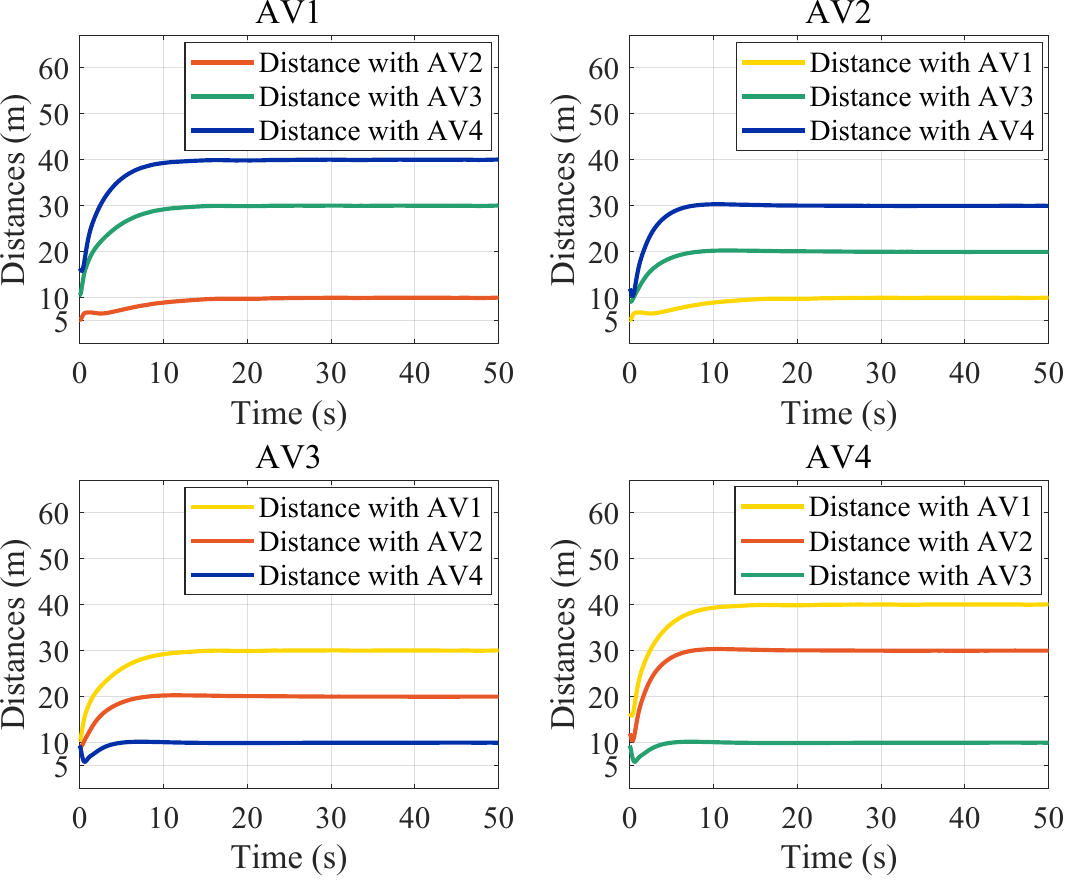}}
            \caption{The safe distances between each AV and the other AVs in different traffic scenarios.}
            \label{distance}
        \end{figure*}

        \begin{figure}
            \centering
            \subfloat{\includegraphics[width=0.48\linewidth]{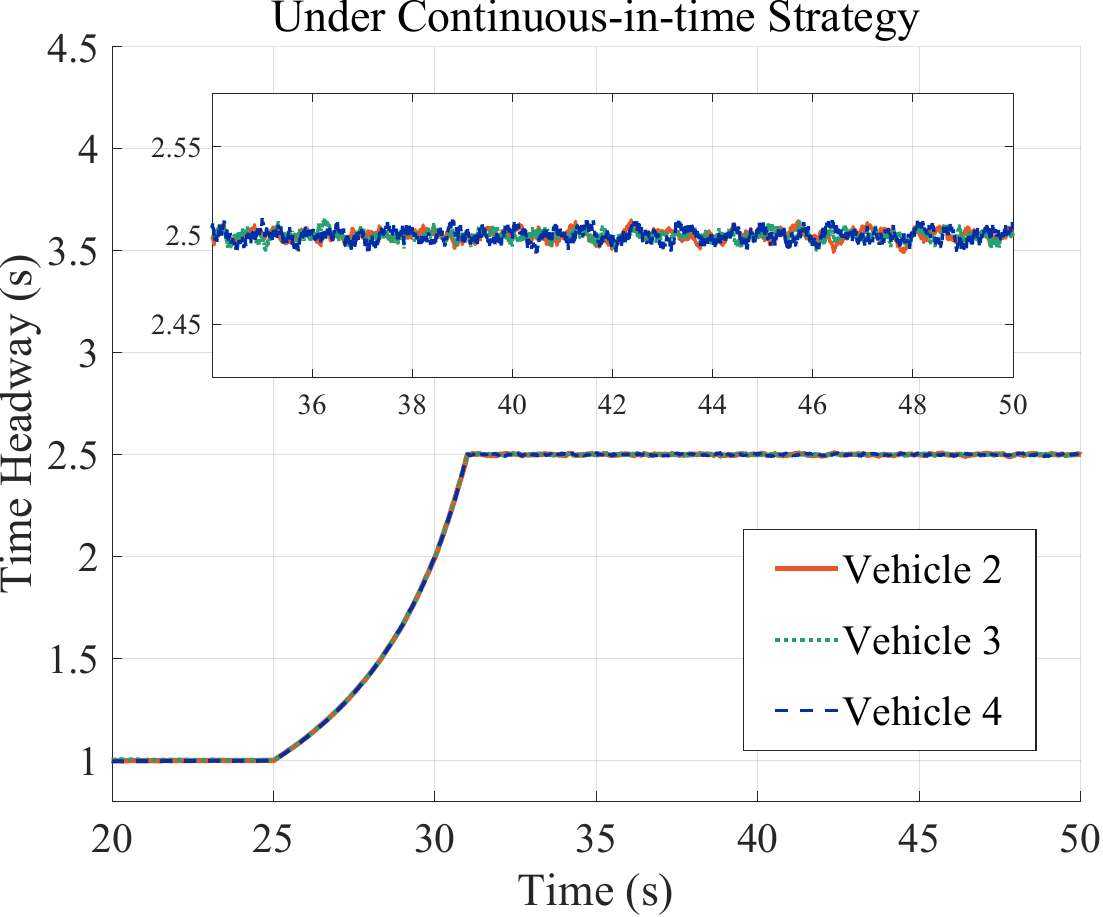}}
            \subfloat{\includegraphics[width=0.48\linewidth]{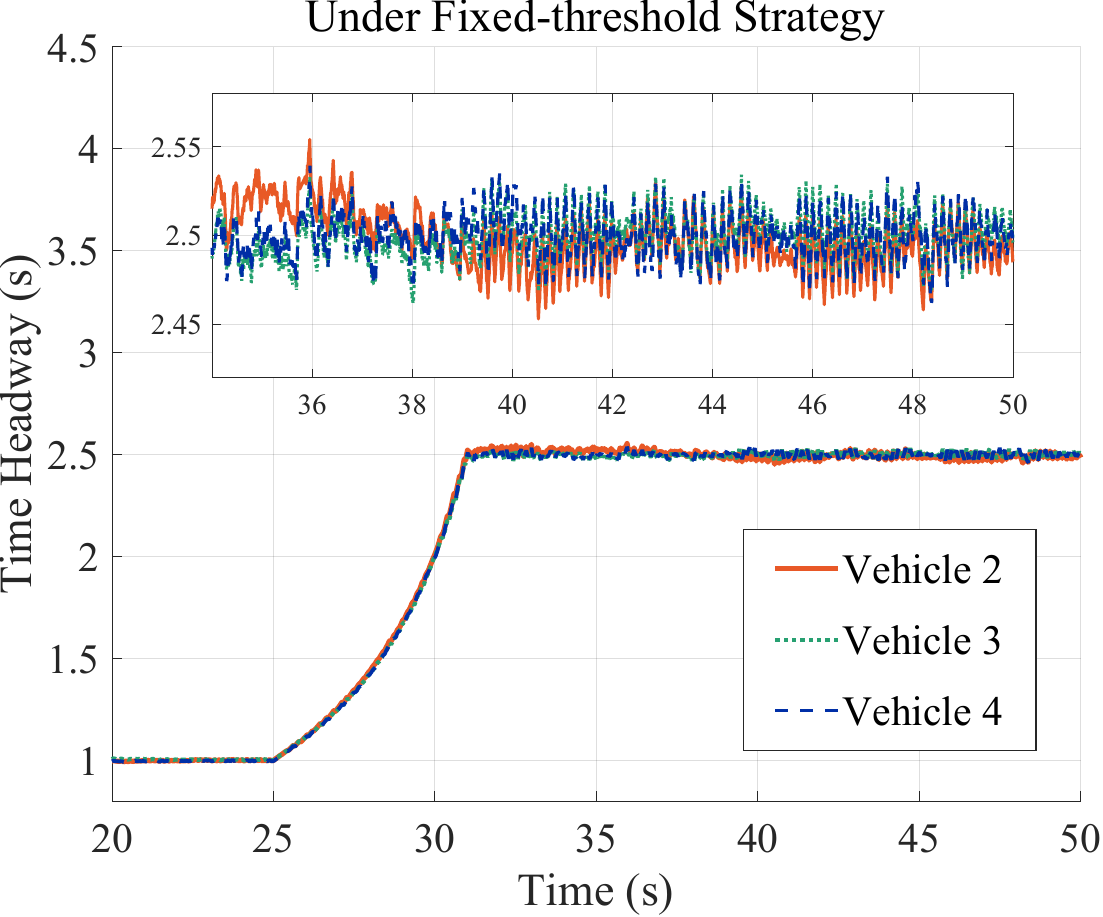}}\\
            \subfloat{\includegraphics[width=0.48\linewidth]{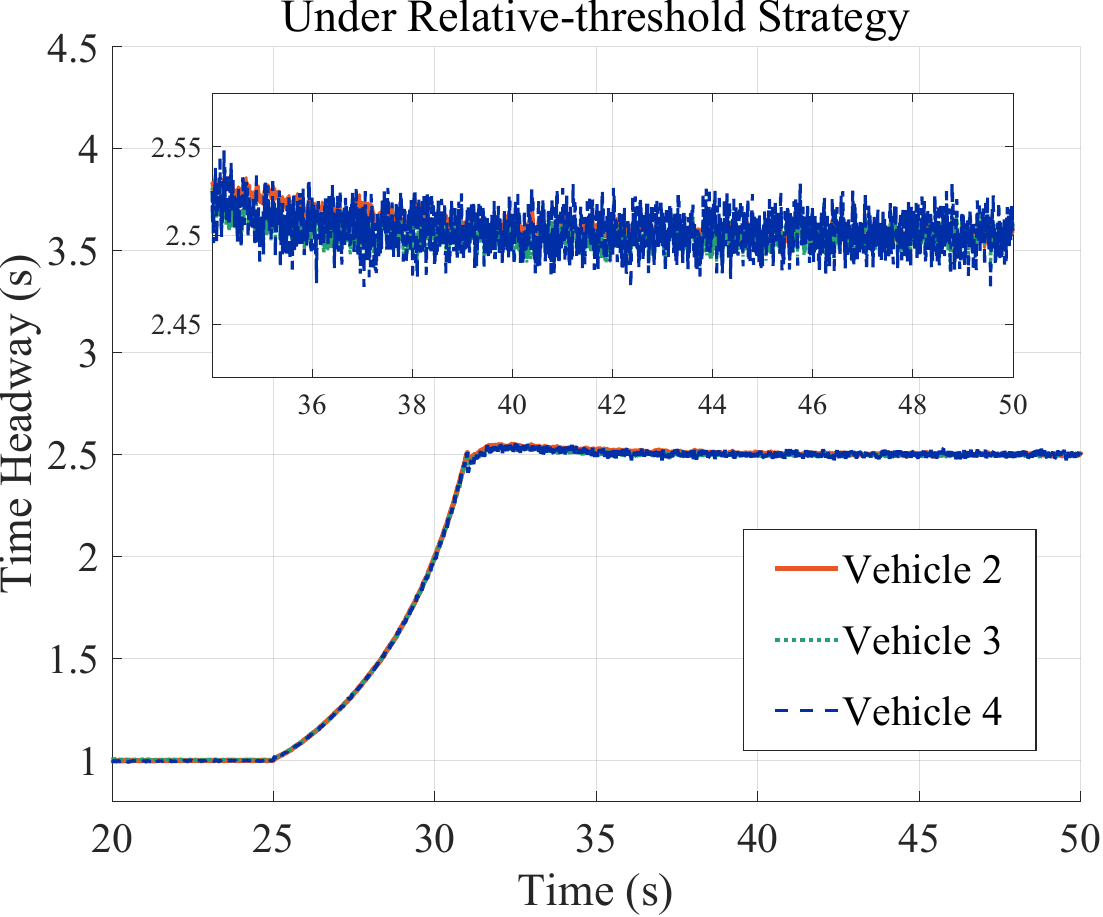}}
            \subfloat{\includegraphics[width=0.48\linewidth]{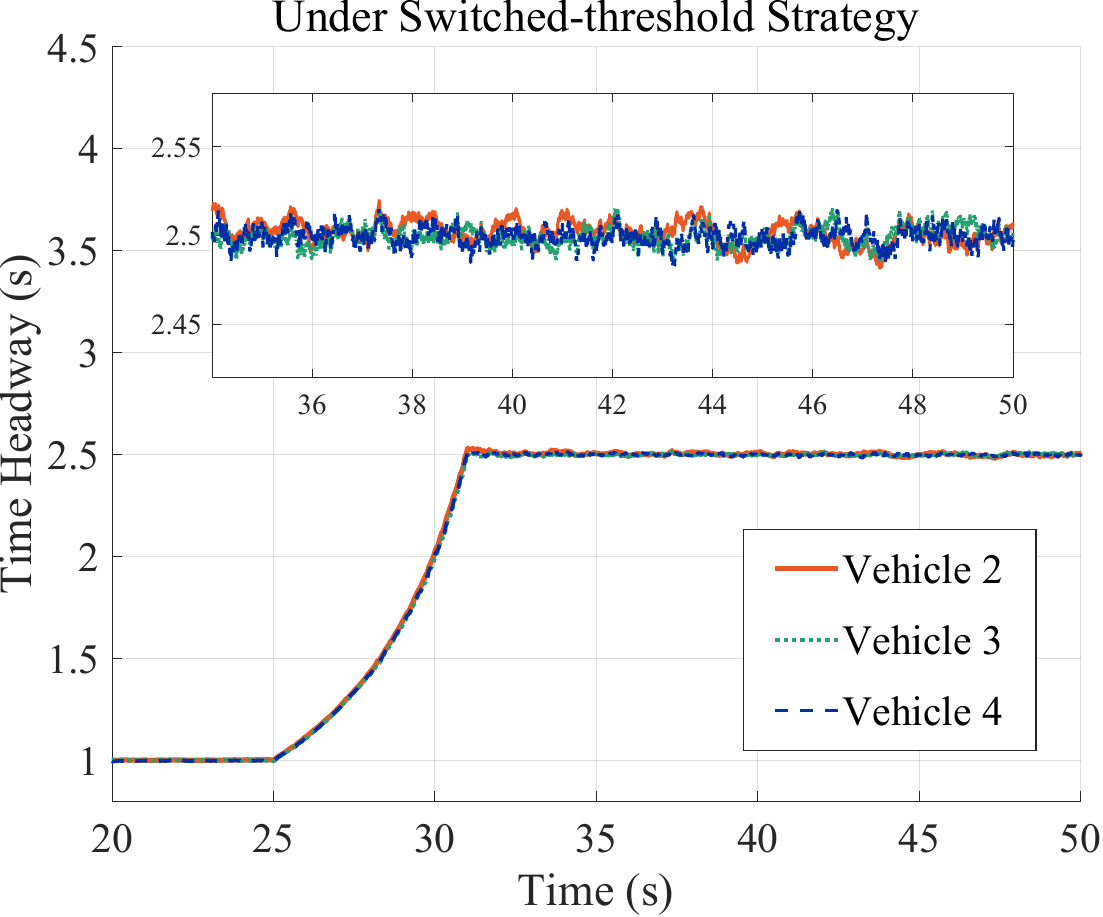}}
            \caption{The time headway under different control strategies in the linear formation.}
            \label{headway}
        \end{figure}
        
	\subsection{Parameter setting}
	We consider the formation of four vehicles. For the leading vehicle V-$1$, we adopt the expected longitudinal speed $v_1^{rx}$ and lateral speed $v_1^{ry}$ trajectories used in \cite{speed} as
	\begin{equation}
		\label{equ23}
		v_1^{rx} = 
		\begin{cases}
			10\ \mathrm{m/s}, & \text{ } 0\ \mathrm{s} \leq t < 25\ \mathrm{s} \\
			-(t-25) + 10\ \mathrm{m/s}, & 25\ \mathrm{s} \leq t < 31\ \mathrm{s} \\
			4\ \mathrm{m/s}, & 31\ \mathrm{s} \leq t \leq 50\ \mathrm{s}
		\end{cases} ,
	\end{equation}
	\begin{equation}
		\label{equ24}
		v_1^{ry} = 0\ \mathrm{m/s}, \quad 0\ \mathrm{s} \leq t < 50\ \mathrm{s}
	\end{equation}
	
	For simplicity, according to \cite{mu}, we assume the uniform resistance and external disturbances across all AVs within the formation, and the unknown longitudinal resistances $f_i^x$, lateral resistances $f_i^y$ and external disturbances $P_i$ are represented as follows
	\begin{equation}
		\label{equ25}
		\begin{bmatrix}
			f_i^{x} \\ 
			f_i^{y}
		\end{bmatrix} 
		= 
		\begin{bmatrix}
			0.5H_1 \times H_2 \times H_3 \times (v_i^{x})^2 \\ 
			0.5H_1 \times H_2 \times H_3 \times (v_i^{y})^2
		\end{bmatrix},
	\end{equation}
	\begin{equation}
		\label{equ26}
		P_i = 
		\begin{bmatrix}
			0.3 \times \sin(2\pi t) \times e^{-t/5} \\ 
			0.3 \times \sin(2\pi t) \times e^{-t/5}
		\end{bmatrix}\ \mathrm{m/s^2},
	\end{equation}
	where $H_1=1.206\ \mathrm{kg/m^3}$ is the air density, $H_2 = 5.58\ \mathrm{m^2}$ denotes the cross-sectional area of AVs, and $H_3 = 0.3$ represents the dimensionless drag coefficient. The initial position and speed of all vehicles and observers are presented in Table \ref{table2}. The expected inter-vehicle distances $l_i=[l_i^x,l_i^y]^T$ are presented in Table \ref{table4}.

    The parameters used in the control process are shown as follows: the mass of vehicles are $m_1 = 1760$ kg, $m_2 = 1920$ kg, $m_3 = 1660$ kg, $m_4 = 1890$ kg; the observer output injection gains are $C_{i,1} = {\rm diag}(5,5)$, $C_{i,2} = {\rm diag}(50,50)$; the definite matrices in controller design are $K_{i,1}={\rm diag}(0.5,0.5)$, $K_{i,2}={\rm diag}(20,20)$; the number of hidden layers in neural networks is $l = 5$; the fixed-threshold strategy takes $\varsigma_i = 2$, $\overline{\varsigma}_i = 2.5$, $\varepsilon_{i,1}=\varepsilon_{i,2}=0.5$; the relative-threshold strategy takes $\zeta_i = 0.9$, $\xi_i = 0.1$, $\overline{\xi} = 2$; the switched-threshold strategy takes $S = 0.55$; other matrices are $\Upsilon_i = {\rm diag}(2,2)$, $\Delta_i = {\rm diag}(0.2,0.2)$.

	\begin{table}[!t]
		\caption{Initial status of vehicles and observers}
		\label{table2}
		\centering
		\footnotesize
		\begin{tabular}{ccccc}
			\toprule
			\textbf{ } & 
			\multicolumn{2}{c}{\textbf{Initial Position (m)}} & 
			\multicolumn{2}{c}{\textbf{Initial Speed (m/s)}} \\
			\cmidrule(lr){2-3} \cmidrule(lr){4-5} 
			$No.$ & $x_i(0)$ & $\hat{x}_i(0)$ & $v_i(0)$ & $\hat{v}_i(0)$ \\
			\midrule
			AV1 & (28, 5.4) & (26, 5.0) & (14, 0) & (12, 0)  \\
			AV2 & (24, 2.0) & (22, 1.6) & (16, 0) & (18, 0)  \\
			AV3 & (18, 9.0) & (16, 8.6) & (16, 0) & (16, 0)  \\
			AV4 & (12, 1.8) & (14, 1.4) & (17, 0) & (14, 0)  \\
			\bottomrule
		\end{tabular}
	\end{table}


	\begin{table}[!t]
		\caption{The expected inter-vehicle distances}
		\label{table4}
		\centering
		\footnotesize
		\begin{tabular}{cccc}
			\toprule
			\textbf{No.} & \textbf{Linear} & \textbf{Square} & \textbf{Linear-queue} \\
			\midrule
			AV1 & (0,0) & (0,0) & (0,0) \\			
                AV2 & (10,0) & (0,3.6) & (10,0)   \\
			AV3 & (10,0) & (10,-3.6) & (20,0)  \\
			AV4 & (10,0) & (0,3.6) & (10,0)  \\
			\bottomrule
		\end{tabular}
	\end{table}

	\subsection{Analysis on AVs formation control algorithm}
        The three-fold formation scenarios shown in Fig. \ref{fig1} are tested. Fig. \ref{tracking} presents the longitudinal and lateral position trajectories of all AVs in different formations. It is evident that the longitudinal trajectory changes are completed rapidly within 4 s, while the lateral changes require 20 s to complete. Moreover, it is noteworthy that only the first AV receives precise trajectories for expected position and speed.

        Fig. \ref{speed&position} compares observed and actual longitudinal/lateral positions and speeds of four AVs in the linear formation under the relative-threshold strategy. Significant deviations between observed and actual values are evident at the initial time, primarily due to estimation errors in the sampling-based observer. Initial inaccuracies in the parameter estimation lead to significant deviations. However, observed values converge to actual positions and speeds over time. Notably, the sampling-based observer effectively tracks both positions and speeds despite relying solely on imprecise sampled position data.

        Table \ref{table5} and Figs. \ref{comparison}-\ref{controller}  illustrate the performance of the proposed multi-threshold event-triggered strategies. Table \ref{table5} presents the number of triggers corresponding to three different threshold event-triggered control strategies. Notably, the data in the switched-threshold strategy includes the total number of switched triggers as well as the number of triggers for each of the two switched strategies. Fig. \ref{comparison} illustrates variations in the longitudinal speed of AV-$2$ within the linear formation scenario under different event-triggered strategies. It is evident that the initial frequencies of speed variation are similar across the three strategies. However, once the tracking speed stabilizes, the relative-threshold strategy exhibits the highest level of control precision, while the fixed-threshold strategy shows the lowest. The switched-threshold strategy offers a moderate level of control precision which can be theoretically adjusted by modifying the switching boundary, $S$. As $S$ increases, the proportion of the fixed-threshold strategy in the triggering policy will increase, with the control objective shifting more towards reducing the number of triggers. Conversely, the relative-threshold strategy will gain greater prominence in the triggering policy, with the control objective becoming more focused on ensuring control performance. Furthermore, the updated time intervals of the control laws for each vehicle are displayed in Fig. \ref{interval}. We also provide a comparison between the continuous controller using the backstepping method and the event-triggered controller for AVs in close formation under the fixed-threshold strategy in Fig. \ref{controller}. By utilizing the event-triggered strategies, the traffic system avoids the necessity of updating the control input at every time step, thereby significantly reducing the computational burden.

        \begin{table}[!t]
		\caption{Trigger counts across the multi-threshold event-triggered strategies}
		\label{table5}
		\centering
		\footnotesize
		\begin{tabular}{ccccc}
			\toprule
			\textbf{No.} & \textbf{Fixed} & \textbf{Relative} & \textbf{Switched} & \textbf{Continuous} \\
			\midrule
			AV1 & 1888 & 7111 & 7033 (1319+6144) & 50000  \\			
                AV2 & 15197 & 44711 & 24314 (9857+14457) & 50000  \\
			AV3 & 24101 & 45752 & 28827 (20719+8108) & 50000  \\
			AV4 & 29904 & 46164 & 33414 (27917+5497) & 50000  \\
			\bottomrule
		\end{tabular}
	\end{table}


        \subsection{Analysis of the traffic safety and mobility}
        In this paper, we analyze the safety and mobility performance of the proposed controller. For safety, we illustrate the distances between AVs during formation control processes as a safety measurement, considering their movements in both longitudinal and lateral directions, as shown in Fig. \ref{distance}. The results show that in scenarios 1 and 3, the minimum distance between AVs remains larger than 5 meters, while in scenario 2, it is constantly above 4 meters. These distances are consistent with safety standards outlined in \cite{yu2021automated}, indicating that the proposed formation control strategy effectively ensures the safety of road traffic performance. The dynamic adjustments in distances reflect the system's ability to maintain safe spacing while transitioning between formations, further validating its robustness in real-world applications.

        For the mobility analysis, we focus on the scenario 1 of linear formation, which has a clear mobility definition. According to \cite{ultra}, the mobility is measured by the time headway, defined by the time difference between consecutive arrival instants of two vehicles passing a certain detector site on the same lane. The time headway is considered a direct measure of road capacity. A short time headway increases road capacity and thus increases mobility, and vice versa. The time headway is be calculated as:
        \begin{equation}
		\label{equ25}
		\begin{aligned}
			\tau^*=\frac{h^*}{v^*}
		\end{aligned},
	\end{equation}
        where $\tau^*$ denotes the value of time headway, $h^*$ and $v^*$ represent the distance between the center of two vehicles and the speed of the subsequent vehicle respectively. 
        
        From Fig. \ref{tracking}, considering the linear formation, we find out that all AVs  fully transition into operating within the same lane at 20 s. Therefore, we only consider the changes in time headway for the proposed formation control strategies from 20 s to 50 s. 
        Fig. \ref{headway} presents the time headway of following AVs under different control strategies. Notably, the variations in the curves of Fig. \ref{headway} are attributed to changes in the formation speed. To elaborate, the time headway transitioned from 1 second to 2.5 seconds during the interval from 25 s to 31 s as a result of the formation vehicle speed reducing from 10 m/s to 4 m/s. This change reasonably corroborates the conclusion presented in \cite{ultra}.
        
        In Fig. \ref{headway}, we also magnify the section of the image where the time headway changes stabilize. Additionally, Table \ref{timeheadway} records the range of maximum and minimum time headway under different control strategies in the linear formation for AV-$i$ ($i=2,3,4$) during the time interval of 35 s to 50 s. In the case where the continuous-in-time strategy is used as the benchmark, the stabilized time headway variation range corresponding to the three ETC strategies is larger, but notably, the statistical results indicate that the switched-threshold ETC strategy exhibits the smallest variation, closely resembling the continuous-in-time control strategy. Overall, the implementation of ETC strategies results in a subtle impact on the range of time headway variation for each following vehicle during stable vehicular formation, with the switched-threshold strategy exerting the least influence.
        
 \begin{table}[!t]
		\caption{Statistics of the range of maximum and minimum time headway (s) under different control strategies in the linear formation for AV-$i$ ($i=2,3,4$) during the time interval of 35 to 50 seconds.}
		\label{timeheadway}
		\centering
		\footnotesize
		\begin{tabular}{ccccc}
			\toprule
			\textbf{No.} & \textbf{Continuous} & \textbf{Fixed} & \textbf{Relative} & \textbf{Switched}  \\
			\midrule			
                AV2 & 0.0184  & 0.0955 (5.1$\times$) & 0.0416 (2.2$\times$) & \textbf{0.0212} (\textbf{1.2$\times$})  \\
			AV3 & 0.0175 & 0.0808 (4.6$\times$) & 0.0368 (2.1$\times$) & \textbf{0.0294} (\textbf{1.7$\times$})  \\
			AV4 & 0.0233 & 0.0746 (3.2$\times$) & 0.0678 (2.9$\times$) & \textbf{0.0324} (\textbf{1.4$\times$})  \\
			\bottomrule
		\end{tabular}
	\end{table}
        
	\section{Conclusion}\label{sec6}

    In this paper, we proposed an observer-based event-triggered adaptive formation control framework for autonomous vehicles (AVs) with longitudinal and lateral motion uncertainties in traffic scenarios involving obstacles and cut-ins. A sampling-based observer was designed to estimate states with intermittent positional measurements, and a backstepping continuous-time boundary controller enabled accurate formation tracking. To enhance control efficiency, we introduced three event-triggered control (ETC) strategies i.e. fixed-threshold, relative-threshold, and switched-threshold, to regulate control signal updates. The relative-threshold strategy achieved the highest tracking precision due to more frequent triggering, while all three strategies ensured system safety. Notably, the switched-threshold strategy maintained vehicular formation mobility with minimal performance impact.

    The adaptive observed event-triggered formation control brings several promising future research directions. It is of interest to introduce the self-triggered control to predict the next triggering time, thereby avoiding the need for real-time monitoring of the controller's value when determining triggering conditions. Another extension is to consider the impact of more complex AV-human interactions on formation control in mixed traffic.


\flushend


\begin{IEEEbiography}[{\includegraphics[width=1in,height=1.25in,clip,keepaspectratio]{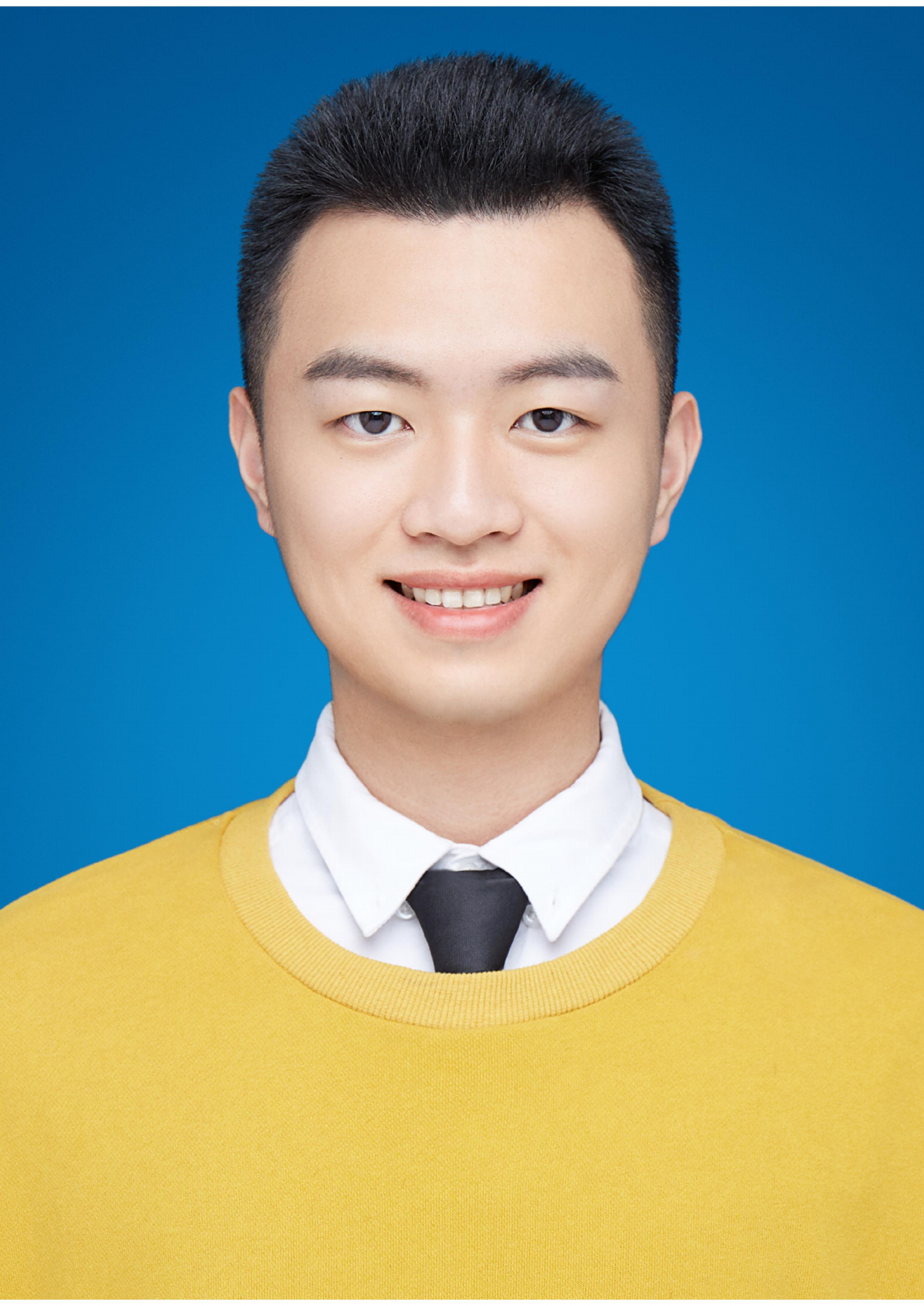}}]{Ziming Wang} (Graduate Student Member, IEEE) received the B.E. degree in Electronic Information Engineering from Southwest University in 2023. He is currently an MPhil student in the Trust of Robotics and Autonomous Systems at the Hong Kong University of Science and Technology (Guangzhou). He is interested in control theory, optimization, reinforcement learning, and their applications in intelligent transportation systems.
\end{IEEEbiography}


    \begin{IEEEbiography}[{\includegraphics[width=1in,height=1.25in,clip,keepaspectratio]{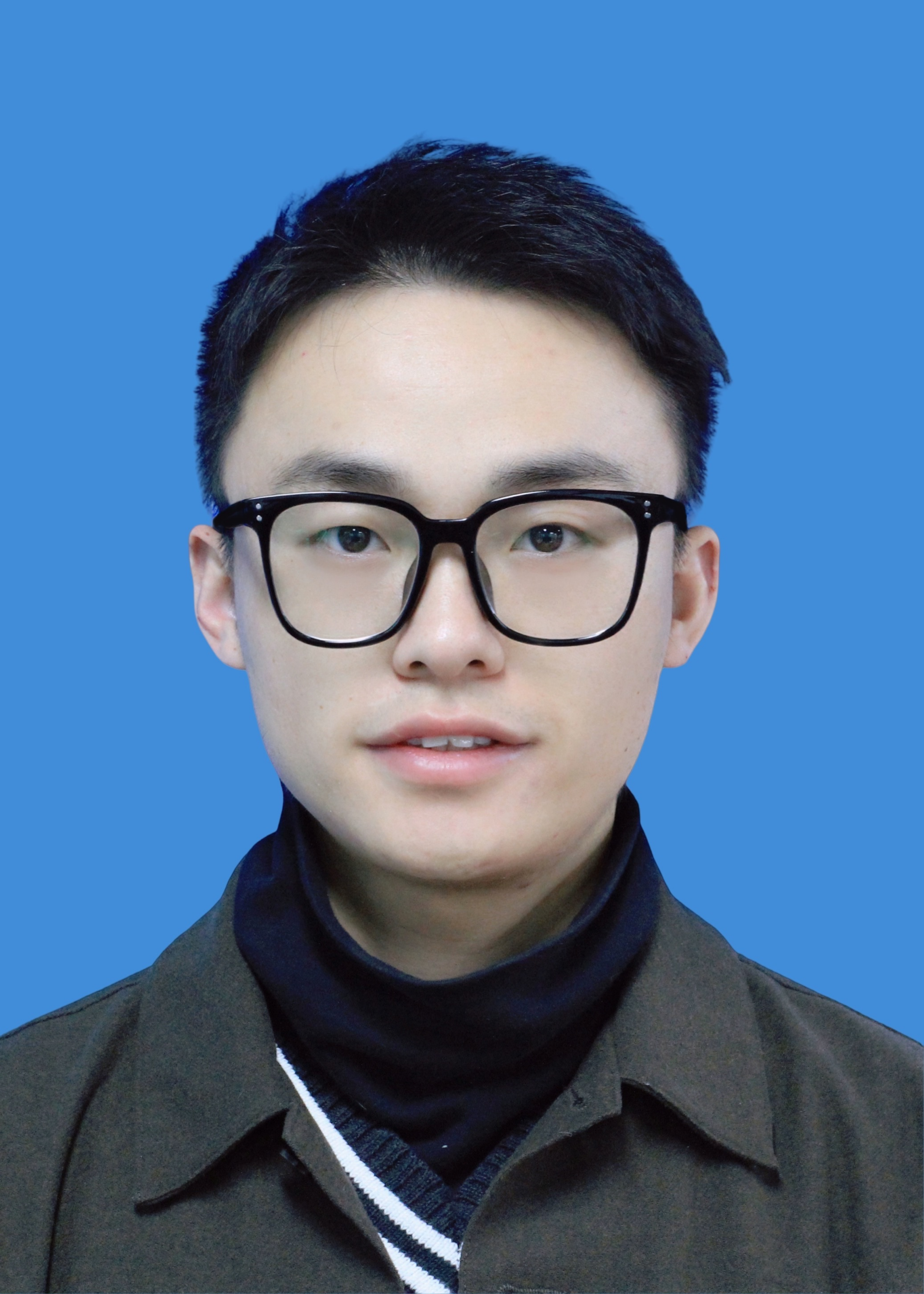}}]{Yihuai Zhang} (Graduate Student Member, IEEE) received his B.E. degree in Vehicle Engineering from Southwest University in 2019, and his M.S. degree in Vehicle Engineering from South China University of Technology in 2022. He is currently a Ph.D. student in the Thrust of Intelligent Transportation at the Hong Kong University of Science and Technology (Guangzhou). His doctoral research focuses on distributed parameter systems, learning and control for dynamical systems, and their applications in intelligent transportation systems.
\end{IEEEbiography}


    \begin{IEEEbiography}[{\includegraphics[width=1in,height=1.25in,clip,keepaspectratio]{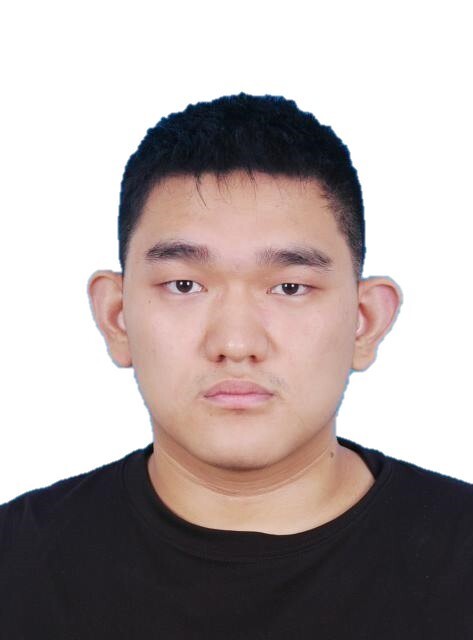}}]{Chenguang Zhao} (Graduate Student Member, IEEE) received the B.Sc. degree in electronic and information engineering and the M.Sc. degree in transportation information engineering from Beihang University in 2019 and 2022, respectively. He is currently pursuing his Ph.D. degree at the Hong Kong University of Science and Technology (Guangzhou). His research interests include CAV platoon control, traffic flow modeling and control.
\end{IEEEbiography}

    
    \begin{IEEEbiography}
[{\includegraphics[width=1in,height=1.25in,clip,keepaspectratio]{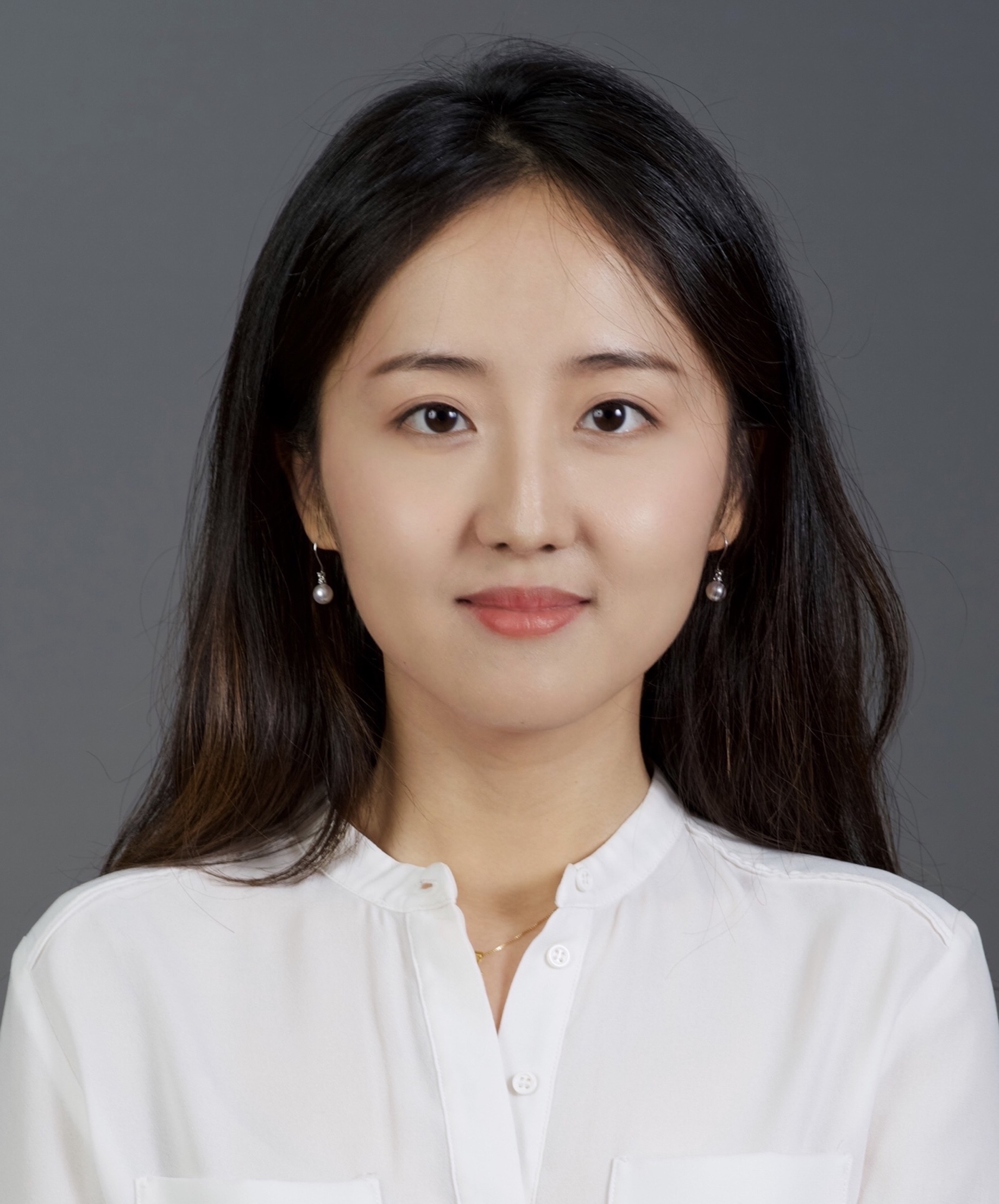}}]{Huan Yu} (Senior Member, IEEE) is an Assistant Professor in the Intelligent Transportation Thrust at the Hong Kong University of Science and Technology (Guangzhou). Yu received her B.Eng. degree in Aerospace Engineering from Northwestern Polytechnical University, and the M.Sc. and Ph.D. degrees in Aerospace Engineering from the Department of Mechanical and Aerospace Engineering, University of California, San Diego. She was a visiting scholar at University of California, Berkeley and Massachusetts Institute of Technology. She is broadly interested in control theory, optimization, and machine learning methodologies and their applications in intelligent transportation systems.
\end{IEEEbiography}
\end{document}